\documentclass[11pt]{amsart}

\theoremstyle{definition}
\newtheorem{theorem}{Theorem}
\newtheorem{corollary}[theorem]{Corollary}
\newtheorem{proposition}[theorem]{Proposition}
\newtheorem{lemma}[theorem]{Lemma}
\newtheorem{definition}[theorem]{Definition}
\newtheorem{example}[theorem]{Example}
\newtheorem{notation}[theorem]{Notation}
\newtheorem{remark}[theorem]{Remark}

\newcommand{\numberset}{\mathbb}
\newcommand{\N}{\numberset{N}}

\newcommand{\F}{\numberset{F}}

\newcommand{\mC}{\mathcal{C}}
\newcommand{\mP}{\mathcal{P}}
\newcommand{\mG}{\mathcal{G}}
\newcommand{\mH}{\mathcal{H}}
\usepackage{hyperref}
\newcommand{\mA}{\mathcal{A}}
\newcommand{\mN}{\mathcal{N}}
\usepackage[margin=3cm]{geometry}

\usepackage{fouriernc}

\usepackage{amsaddr}

\author{Elisa Gorla$^1$ and  Alberto Ravagnani$^2$}
\address{Institut de Math\'{e}matiques
, Universit\'{e} de Neuch\^{a}tel\\Rue
Emile-Argand 11, CH-2000 Neuch\^{a}tel, Switzerland}
\email{$^1$elisa.gorla@unine.ch}
\email{$^2$alberto.ravagnani@unine.ch}

\usepackage{algorithm}
\usepackage{algorithmic}

\begin{document}

\title{Partial Spreads in Random Network Coding}

\subjclass[2010]{11T71}
\keywords{network code, spread code, subspace distance}

\begin{abstract}
 Following the approach by R. K\"{o}tter and F. R. Kschischang,
we study network codes as families of $k$-dimensional
linear subspaces of a vector space $\F_q^n$, $q$ being a prime power and $\F_q$
the finite field with $q$ elements.
 In particular, following an idea in finite projective geometry, we
introduce a class of network codes which we call \textit{partial
  spread codes}. Partial spread codes naturally generalize spread
codes. In this paper we provide an easy description of such codes in
terms of matrices, discuss their maximality, and provide an efficient
decoding algorithm. 
\end{abstract}

\maketitle

\setcounter{section}{-1}

\section{Introduction}\label{intr0}

The topology of a network is well-modeled by a directed multigraph. Vertices
without incoming edges play the role of \textit{sources} and vertices without
outgoing edges play the role of \textit{sinks}. Vertices which are neither
sources nor sinks are called \textit{nodes}. The interest in network modeling
 is due to its several applications in technology (distributed
storage, peer-to-peer networking and, in particular, wireless communications).

In \cite{origine} Ahlswede, Cai, Li, and Yeung discovered that the
information rate may be improved by employing coding at the nodes of a
network (instead of simply routing). Moreover, Li, Cai and Yeung
proved in \cite{origine2} that, in a multicasting situation, maximal
information rate can be achieved by allowing the nodes to transmit linear 
combinations of the inputs they receive, provided that the size of the
base field is large enough.

A turning point in the study of linear network codes was the paper
\cite{KK1} by R. K\"{o}tter and F. R. Kschischang. The authors
suggested an algebraic approach to the topic, developing a clear and
rigorous mathematical setup. Interesting connections with classical
projective geometry also emerged. Several other interesting papers
followed the same approach, e.g., \cite{ref1}, \cite{ref2}, and
\cite{KK2}.

In this paper, we propose and study a class of network codes,
which fit within the same framework. In Section \ref{intr} the algebraic
approach by K\"{o}tter and Kschischang is briefly recalled. In Section
\ref{sec2} we introduce a family of network codes which we call
partial spread codes, and which generalize spread codes (see
\cite{GR}). Our codes have the same cardinality and distance
distribution as the codes proposed in~\cite{EV}. The elements of our
codes however are given as rowspaces of appropriate matrices in block
form. The structure of this family of matrices allow us to derive
properties of the code, which we discuss in Section \ref{sec3}. In
particular, we establish the maximality of partial spread codes with
respect to containment.
Based on the same block matrix structure, in Section \ref{sec4} we
are able to give an efficient decoding algorithm.

\section{Preliminaries}\label{intr}

Let $q$ be a prime power and let $\F_q$ denote the finite field with $q$
elements.
Fix an integer $n>1$ and let $\mathcal{P}(\F_q^n)$ be the \textbf{projective
geometry}
of $\F_q^n$, i.e., the set of all the vector subspaces of $\F_q^n$.
 Following \cite{KK1}, a $q$-ary \textbf{network code} of length $n$ is defined
to be a
subset $\mC \subseteq \mathcal{P}(\F_q^n)$ with at least two elements.
The \textbf{subspace distance} on $\mP(\F_q^n)$ is the distance map
 $d:\mP(\F_q^n) \times \mP(\F_q^n) \to \N$ defined, for any $U,V \in
\mP(\F_q^n)$, by
$$d(U,V):=\dim(U)+\dim(V)-2\dim(U \cap V).$$ As in classical Coding Theory, 
the \textbf{minimum distance} of a network-code
$\mC \subseteq \mP(\F_q^n)$ is the integer
$d(\mC):= \min \{ d(U,V) : U,V \in \mC, U \neq V \}$. 
The \textbf{maximum dimension} of $\mC$ is denoted and
defined by $\ell(\mC):=\max_{V \in \mC} \ \dim(V)$.
Let us briefly recall from \cite{KK1} the framework for errors and erasures
in random network coding.
If $1 \le e <n$ is an integer, then an $e$-\textbf{erasure} on an element 
$V \in \mP(\F_q^n)$ such that $\dim(V) \ge e$ 
is the projection of $V$ onto
an $e$-dimensional subspace of $V$. In other words, an $e$-erasure replaces $V$
with
an $e$-dimensional subspace of $V$. A $t$-dimensional \textbf{error} $E$ on an
element
$V \in \mP(\F_q^n)$ corresponds to the direct sum $V \oplus E$, where $E \in
\mP(\F_q^n)$,
$\dim(E)=t$ and $E \cap V=\{ 0\}$. If $\mC \subseteq \mP(\F_q^n)$ is a network
code, then an input codeword
$V \in \mC$ and its output $U \in \mP(\F_q^n)$ are related by $U=\mH_e(V)
\oplus
E$, 
where $1 \le e \le \dim(V)$, $\mH_e$ is an $e$-erasure operator and $E \in
\mP(\F_q^n)$ the error. As usual, one can bound the number of erasures and
errors
that can take place such that a minimum distance decoder is guaranteed to
successfully
return the sent codeword. 

\begin{theorem}[\cite{KK1}, Theorem 2]
 Let $\mC \subseteq \mP(\F_q^n)$ be a network-code of minimum distance $d$. 
Assume that an input $V \in \mC$ and its output
$U \in \mP(\F_q^n)$ are related by $U=\mH_e(V) \oplus E$,
where $e \le \ell(\mC)$, $\mH_e$ is an $e$-erasure and $E\in \mP(\F_q^n)$ is an
error.
 Set $t:=\dim(E)$. A minimum distance decoder corrects $U$ in $V$, 
provided that $2(t+\ell(\mC)-e)<d$.
\end{theorem}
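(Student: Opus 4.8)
The plan is to verify that the hypothesis $2(t+\ell(\mC)-e)<d$ forces the transmitted codeword $V$ to be the unique element of $\mC$ of minimal subspace distance from the received word $U$, so that a minimum distance decoder necessarily outputs $V$. The only external fact I need is that $d$ is a metric on $\mP(\F_q^n)$, and in particular that it obeys the triangle inequality $d(U,V')\le d(U,V)+d(V,V')$ for all $U,V,V'$. I would take this from \cite{KK1}, or prove it in one line from the identity $d(A,B)=\dim(A+B)-\dim(A\cap B)$ together with the modular inequalities for dimensions.

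Step 1: bound $d(U,V)$ from above. By the definition of an $e$-erasure, $\mH_e(V)$ is an $e$-dimensional subspace of $V$; by the definition of the channel, $U=\mH_e(V)\oplus E$ with $\dim E=t$, so $\dim U=e+t$ and $\mH_e(V)\subseteq U$. Hence $\mH_e(V)\subseteq U\cap V$, so $\dim(U\cap V)\ge e$, and therefore
\[
d(U,V)=\dim U+\dim V-2\dim(U\cap V)\le (e+t)+\dim V-2e\le t+\ell(\mC)-e,
\]
where the last step uses $\dim V\le\ell(\mC)$.

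Step 2: compare with any competing codeword. Let $V'\in\mC$, $V'\neq V$. Since $d(\mC)=d$ and $d(V,V')\ge d$, the triangle inequality gives $d\le d(V,V')\le d(U,V)+d(U,V')$, hence $d(U,V')\ge d-d(U,V)$. Subtracting and using Step 1,
\[
d(U,V')-d(U,V)\ge d-2\,d(U,V)\ge d-2\bigl(t+\ell(\mC)-e\bigr)>0,
\]
the final inequality being exactly the hypothesis. So $d(U,V)<d(U,V')$ for every $V'\in\mC\setminus\{V\}$, which means $V$ is the unique codeword closest to $U$; a minimum distance decoder returns it, as claimed.

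I do not anticipate a real obstacle: this is the classical ``less than half the minimum distance'' bound transported to the operator channel of \cite{KK1}. The only place where care is needed is the bookkeeping of dimensions in Step 1 --- specifically the observations that the direct-sum description forces $\dim U=e+t$ and that $\mH_e(V)$ sits inside both $U$ and $V$, which is precisely what yields $\dim(U\cap V)\ge e$ and makes the estimate go through. Everything else is the triangle inequality.
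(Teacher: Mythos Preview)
Your proof is correct: the bound $d(U,V)\le t+\ell(\mC)-e$ follows exactly from the containment $\mH_e(V)\subseteq U\cap V$ and the dimension count $\dim U=e+t$, and the uniqueness of $V$ as the nearest codeword then follows from the triangle inequality for the subspace metric together with the hypothesis $2(t+\ell(\mC)-e)<d$. The paper does not supply its own proof of this statement---it is quoted verbatim as \cite{KK1}, Theorem~2---so there is nothing to compare your argument against; what you have written is the standard proof from that reference.
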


A natural class of network codes is obtained by considering subsets of
$\mP(\F_q^n)$, all of whose elements have the same dimension $1 \le k
\le n-1$. Such codes are called \textbf{constant dimension}
codes. By introducing the Grassmannian variety
$$\mG_q(k,n):= \{ V \in \mP(\F_q^n) : \dim(V)=k \},$$
a $q$-ary constant dimension network code of lenght $n$ and dimension $k$ is
simply a subset
$\mC \subseteq \mG_q(k,n)$ of at least two elements. It easily follows from the
definition that
any constant dimension
network code has
even minimum distance.

\begin{remark}
 The cardinality of the Grassmannian variety $\mG_q(k,n)$ is
known to be 
$${\begin{bmatrix} n \\ k \end{bmatrix}}_q := \frac{(q^n-1)(q^{n-1}-1) \cdots 
(q^{n-k+1}-1)}{(q^k-1)(q^{k-1}-1) \cdots (q-1)}= \prod_{i=0}^{k-1}
\frac{q^{n-i}-1}{q^{k-i}-1}.$$
\end{remark}

Bounds on the size of network codes have been discussed in depth by R. K\"{o}tter
and F. R. Kschischang in \cite{KK1}. More recently, in~\cite{EV},
T. Etzion and A. Vardy obtained other bounds.

\begin{theorem}[Singleton-like Bound, \cite{KK1}, Theorem 9]\label{sing}
 Let $\mC \subseteq \mG_q(k,n)$ be a network code of minimum distance $d$. Then
$$|\mC| \le {\begin{bmatrix} n-(d-2)/2 \\ \max\{ k, n-k\} \end{bmatrix}}_q.$$
\end{theorem}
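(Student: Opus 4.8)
My plan is to normalize the statement with two easy reductions and then to realize the right‑hand side as the cardinality of a Grassmannian into which $\mC$ injects.

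Since all members of $\mC$ have dimension $k$, the minimum distance is even; write $d=d(\mC)=2\delta$, so that $n-(d-2)/2=n-\delta+1$. If $\delta=1$ the asserted inequality is merely $|\mC|\le{\begin{bmatrix} n\\ k\end{bmatrix}}_q=|\mG_q(k,n)|$, so I assume $\delta\ge 2$. Because $\dim(U\cap V)\ge\max\{0,2k-n\}$ for all $U,V\in\mG_q(k,n)$, one gets $2\delta=d\le\min\{2k,2(n-k)\}$, i.e.\ $\delta\le\min\{k,n-k\}$; in particular $k-\delta+1\ge 1$ and $n-\delta+1\ge k+1$. The orthogonal‑complement map $V\mapsto V^\perp$ is a bijection $\mG_q(k,n)\to\mG_q(n-k,n)$ preserving the subspace distance, since $\dim(U^\perp\cap V^\perp)=n-\dim(U+V)$ gives $d(U^\perp,V^\perp)=2(n-k)-2\bigl(n-\dim(U+V)\bigr)=2\dim(U+V)-2k=d(U,V)$. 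Replacing $\mC$ by $\{V^\perp:V\in\mC\}$ if necessary, I may therefore assume $k\le n-k$, so $\max\{k,n-k\}=n-k$ and, by $q$‑binomial symmetry, ${\begin{bmatrix} n-\delta+1\\ n-k\end{bmatrix}}_q={\begin{bmatrix} n-\delta+1\\ k-\delta+1\end{bmatrix}}_q=\bigl|\mG_q(k-\delta+1,\,n-\delta+1)\bigr|$. It thus suffices to produce an injection $\mC\hookrightarrow\mG_q(k-\delta+1,\,n-\delta+1)$.

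The elementary core of such an injection is the following. Let $W\subseteq\F_q^n$ be any subspace with $\dim W=n-\delta+1$. For every $V\in\mC$ one has $\dim(V\cap W)\ge\dim V+\dim W-n=k-\delta+1$, and the map $V\mapsto V\cap W$ is injective on $\mC$: if $U\cap W=V\cap W$ for distinct $U,V\in\mC$, then this common subspace has dimension $\ge k-\delta+1$ and lies in both $U$ and $V$, forcing $\dim(U\cap V)\ge k-\delta+1$, which contradicts the fact that $d(\mC)=2\delta$ makes $\dim(U\cap V)\le k-\delta$. Moreover $V\cap W$ lands in $\mG_q(k-\delta+1,W)\cong\mG_q(k-\delta+1,\,n-\delta+1)$ exactly when $\dim(V\cap W)=k-\delta+1$, i.e.\ when $V+W=\F_q^n$. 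So the entire difficulty is to exhibit one $(n-\delta+1)$‑dimensional $W$ transversal to every codeword — equivalently, a $(\delta-1)$‑dimensional $W^\perp$ meeting each $V^\perp\in\mC^\perp$ only in $0$.

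This transversality is the step where the proof does real work, and it is the main obstacle: a direct counting or averaging argument alone does not guarantee such a $W$ once $\mC$ is large (for instance, no such $W$ exists when $\mC^\perp$ is a spread). The standard remedy is to puncture one coordinate at a time instead of all $\delta-1$ at once: intersect $\mC$ with a hyperplane $H_1$ of $\F_q^n$ to obtain $\{V\cap H_1:V\in\mC\}\subseteq\mG_q(k,n-1)\cup\mG_q(k-1,n-1)$, and iterate $\delta-1$ times. At each step the injectivity argument above (with a hyperplane in place of $W$) shows that no two codewords are identified, since the minimum distance never drops below $4$ during the first $\delta-1$ steps; and, as intersecting with a hyperplane lowers the dimension of any subspace by at most $1$, the minimum distance of the punctured code is at least $d-2$. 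The delicate point is then the dimension bookkeeping: one must choose the successive hyperplanes so that after $\delta-1$ steps $|\mC|$ is realized as the size of a subset of $\mG_q(k-\delta+1,\,n-\delta+1)$, whence $|\mC|\le\bigl|\mG_q(k-\delta+1,\,n-\delta+1)\bigr|$. Carrying out this bookkeeping carefully is precisely the content of \cite[Theorem~9]{KK1}, whose proof I would follow.
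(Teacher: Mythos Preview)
The paper does not give its own proof of this statement: Theorem~\ref{sing} is quoted from \cite[Theorem~9]{KK1} and used as a black box, so there is no in-paper argument to compare against. Your proposal ultimately defers to the same reference, which is entirely consistent with how the result is handled here.

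One small correction to your sketch of the puncturing step, since it is a genuine point of understanding. You write that ``one must choose the successive hyperplanes so that after $\delta-1$ steps $|\mC|$ is realized as the size of a subset of $\mG_q(k-\delta+1,\,n-\delta+1)$''; but no clever choice of hyperplanes is needed (or, in general, possible) to force every codeword's dimension to drop. The K\"otter--Kschischang puncturing instead fixes an \emph{arbitrary} hyperplane $H$ and, for each $V\in\mC$, replaces $V$ by $V\cap H$ when $V\nsubseteq H$ and by an \emph{arbitrarily chosen} $(\dim V-1)$-dimensional subspace of $V$ when $V\subseteq H$. It is this second clause---not the selection of $H$---that keeps the punctured code constant-dimension; the bound $d'\ge d-2$ survives because passing from $V$ to any codimension-one subspace can only decrease each pairwise subspace distance by at most $2$. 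With that amendment your outline matches the argument in \cite{KK1}.
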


The family of \textit{spread codes} has been introduced in \cite{GR},
and an efficient decoding algorithm for such codes has been provided
in \cite{GR2}.

\begin{definition}\label{prima}
 A $k$-\textbf{spread} of $\F_q^n$ is a collection of subspaces
${\{V_i\}}_{i=1}^t$ of $\F_q^n$ (here we take $k < n$) such that
\begin{enumerate}
\item $\dim V_i=\dim V_j=k$ for any $i,j \in \{ 1,...,t\}$,
\item $V_i \cap V_j=\{ 0\}$ whenever $i \neq j$,
\item $\F_q^n=\bigcup_{i=1}^t V_i$.
\end{enumerate}
\end{definition}

\begin{remark}\label{sppp}
 A $k$-spread of $\F_q^n$ exists if and only if $k$ divides $n$ (see \cite{H},
Corollary 4.17). From the definition we see that if ${\{V_i\}}_{i=1}^t$ is
a $k$-spread of $\F_q^n$ then
$t=(q^n-1)/(q^k-1)$. Being a subset of the Grassmannian $\mG_q(k,n)$, a
$k$-spread in $\F_q^n$ is a 
$q$-ary network
code of lenght $n$, dimension $k$ and minimum distance $2k$. It is easily
checked
that spread codes
meet the Singleton-like bound (Theorem \ref{sing}).
\end{remark}

\section{Partial spread codes}\label{sec2}

In this section we introduce a generalization of the definition of
spread and a related family of network codes, whose parameters $k$ and $n$ 
can be chosen freely.

\begin{definition}
 A \textbf{partial $k$-spread} of $\F_q^n$ is a subset $\mC \subseteq
\mG_q(k,n)$
such that $U \cap V= \{ 0\}$ for any $U,V \in \mC$ with $U \neq V$. A
 partial
$k$-spread of $\F_q^n$ with at least two elements is  a 
$q$-ary network code of lenght $n$, dimension $k$ and minimum distance $2k$. We
will call
such a code a \textbf{partial spread code}.
\end{definition}

\begin{lemma}\label{bound}
 Let $\mC \subseteq \mG_q(k,n)$ be a partial spread code. Denote by $r$ the
remainder
obtained dividing $n$ by $k$. Then
$$|C| \le \frac{q^n-q^r}{q^k-1}.$$
\end{lemma}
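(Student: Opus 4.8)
The plan is to exploit a simple counting argument on the nonzero vectors of $\F_q^n$, combined with the fact that the subspaces in a partial $k$-spread meet pairwise only in $\{0\}$. First I would observe that if $\mC = \{V_1,\dots,V_m\}$ is a partial $k$-spread, then the sets $V_i \setminus \{0\}$ are pairwise disjoint, since $V_i \cap V_j = \{0\}$ for $i \neq j$. Each $V_i$ has exactly $q^k - 1$ nonzero vectors, so $m(q^k-1) \le q^n - 1$, giving the crude bound $m \le (q^n-1)/(q^k-1)$. This is not yet what we want when $k \nmid n$, so the point is to improve it using the remainder $r$.

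The key refinement is to argue that the union $\bigcup_i V_i$ cannot be all of $\F_q^n$, and in fact must miss enough vectors that the bound drops to $(q^n - q^r)/(q^k-1)$. Write $n = sk + r$ with $0 \le r < k$; since $k \nmid n$ we have $r \ge 1$ (if $r = 0$ the statement reduces to the spread bound and there is nothing new). Suppose for contradiction that $m > (q^n - q^r)/(q^k-1)$, i.e. $m \ge (q^n - q^r)/(q^k-1) + 1 = (q^n - q^r + q^k - 1)/(q^k-1)$. Counting nonzero vectors covered, $\bigcup_i (V_i \setminus \{0\})$ has cardinality $m(q^k-1) \ge q^n - q^r + q^k - 1$, so the number of nonzero vectors of $\F_q^n$ left uncovered is at most $(q^n - 1) - (q^n - q^r + q^k - 1) = q^r - q^k < 0$ — wait, that is already a contradiction only if $q^r < q^k$, which holds. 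Hmm, but that would prove the bound directly; let me reconsider, because the subtlety is that one must also rule out the borderline case, so the actual argument needs the uncovered set to be shown nonempty of size at least something. The honest approach: pick any $i$, say $V_1$, and look at a complement. Actually the clean route is to intersect with a hyperplane or a well-chosen subspace and induct, or to use the following: the number of nonzero vectors outside $\bigcup_{i} V_i$ must be divisible in a suitable sense, or must be at least $q^r - 1$ by an extremal-set argument. I would set up the induction on $s$ (the quotient $\lfloor n/k \rfloor$): restrict the partial spread to the hyperplane structure or pass to a quotient $\F_q^n / V_1 \cong \F_q^{n-k}$ and track how the images of the other $V_i$ behave there.

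The main obstacle I anticipate is precisely making the covering argument tight: a pure "disjoint nonzero vectors" count only yields $m \le (q^n-1)/(q^k-1)$, and squeezing out the $q^r$ instead of $1$ requires either (a) an inductive/projection argument showing that at least $q^r - 1$ nonzero vectors must remain uncovered, or (b) a divisibility/modular observation. I would pursue route (a): fix $V_1 \in \mC$, consider the quotient map $\pi : \F_q^n \to \F_q^n/V_1$, note that for $i \ge 2$ the image $\pi(V_i)$ is again $k$-dimensional (since $V_i \cap V_1 = 0$) but the images need no longer be pairwise disjoint; instead I would count with multiplicity, showing each nonzero vector of the quotient $\F_q^{n-k}$ is hit at most $(q^k-1)/(q^k-1) = $ boundedly often — more carefully, a vector $w \in \F_q^n/V_1$ lifts to a coset of size $q^k$, and the $V_i$'s ($i\ge 2$) each meet such a coset in at most one point, so the number of $i \ge 2$ with $w \in \pi(V_i)$ is at most $q^k$; refining this to get the arithmetic to close is the delicate step. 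The cleanest statement to aim for is: by induction, a partial $k$-spread of $\F_q^{n-k}$ (with $n-k = (s-1)k + r$) has at most $(q^{n-k}-q^r)/(q^k-1)$ elements, and relating $\mC$ to such a configuration upstairs gives $m - 1 \le q^k \cdot \frac{q^{n-k}-q^r}{q^k-1} + (\text{correction})$, which rearranges to the desired bound. I expect the bookkeeping of the correction terms, and the base case $s = 1$ (where $k < n < 2k$, forcing $|\mC| \le q^r \cdot \tfrac{q^{k}-1}{q^k-1}$-type estimates, in fact $|\mC|=1$ is immediate there unless... no, one can have more), to be where the real work lies.
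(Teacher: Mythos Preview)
Your contradiction argument in the second paragraph is already a complete and correct proof, and you abandon it for no reason. You assume $m \ge (q^n-q^r)/(q^k-1) + 1$, obtain $m(q^k-1) \ge q^n - q^r + q^k - 1 > q^n - 1$, and this contradicts the crude count $m(q^k-1)\le q^n-1$. That is the whole argument; there is no ``borderline case'' to rule out. The only thing you need to note (and you implicitly use it when writing $m \ge (q^n-q^r)/(q^k-1)+1$) is that $(q^n-q^r)/(q^k-1)$ is an integer, which follows from $k\mid n-r$: indeed $q^n-q^r = q^r(q^{n-r}-1)$ and $q^k-1 \mid q^{n-r}-1$.

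This is exactly the paper's proof, phrased slightly differently. The paper simply observes $|\mC|\le \lfloor (q^n-1)/(q^k-1)\rfloor$ and computes the floor by writing
\[
\frac{q^n-1}{q^k-1} = \frac{q^r(q^{n-r}-1)}{q^k-1} + \frac{q^r-1}{q^k-1},
\]
where the first summand is an integer and the second lies in $[0,1)$ since $0\le r<k$. In your taxonomy this is route~(b), the divisibility/modular observation. Everything from ``The honest approach'' onward --- the quotient by $V_1$, the induction on $s$, the multiplicity bookkeeping --- is unnecessary machinery for a bound that falls out of a one-line floor computation.
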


\begin{proof}
Since $\mC$ is a set of $k$-dimensional vector subspaces of $\F_q^n$ with 
trivial 
pairwise intersections, we deduce
$|\mC|\cdot (q^k-1)+1 \le q^n$. Since $k$ divides $n-r$, $(q^{n-r}-1)/(q^k-1)$
is an integer. Hence
$$ |\mC| \le \Bigg \lfloor \frac{q^n-1}{q^k-1} \Bigg \rfloor =\Bigg \lfloor
\frac{q^r(q^{n-r}-1)}{q^k-1} +  \frac{q^r-1}{q^k-1} \Bigg \rfloor= 
\frac{q^n-q^r}{q^k-1}.$$ 
\end{proof}

The bound given in Lemma \ref{bound} admits some non-trivial improvements. See
\cite{Beu_2} and
 \cite{Drake_etc} for details. The following lower bound for partial
$k$-spread in 
$\F_q^n$ 
 is due to A. Beutelspacher (see \cite{Beu_1} for a non-constructive proof). 

\begin{lemma}\label{low}
 Let $q$ be a prime power and let $1 \le k < n$ be integers. Write
$n=hk+r$ with $0 \le r \le k-1$. Denote by $\mA_q(k,n,2k)$ the largest possible
size of a network code $\mC \subseteq \mG_q(k,n)$ of minimum distance $2k$. Then
$$\mA_q(n,k,2k) \ge (q^n-q^r)/(q^k-1)-q^r+1.$$
\end{lemma}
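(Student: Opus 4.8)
The idea is to prove this constructively by recursion on $h$. Write $n = hk+r$ with $0 \le r \le k-1$. The base case is essentially the case where $\F_q^n$ admits a genuine $k$-spread after removing a small complementary piece: if $r=0$ one already has a $k$-spread of size $(q^n-1)/(q^k-1)$, which equals $(q^n-q^r)/(q^k-1)-q^r+1$ when $r=0$, so there is nothing to do; the content is in $r>0$. For the inductive construction I would decompose $\F_q^n = W \oplus W'$ where $\dim W = (h-1)k + r$ (so $W \cong \F_q^{(h-1)k+r}$) and $\dim W' = k$. By the inductive hypothesis $W$ carries a partial $k$-spread $\mC'$ with $|\mC'| \ge (q^{(h-1)k+r}-q^r)/(q^k-1) - q^r + 1$; these subspaces, viewed inside $\F_q^n$, still pairwise intersect trivially. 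I then need to add many more $k$-subspaces of $\F_q^n$, all meeting $W$ and each other trivially, to account for the "new" $q^{(h-1)k}$ factor.

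The mechanism for producing the new subspaces is the standard one: fix a $k$-spread $\mS$ of the $2k$-dimensional space $W' \oplus U$, where $U$ is a fixed $k$-dimensional subspace of $W$ (this spread has $(q^{2k}-1)/(q^k-1) = q^k+1$ members). Now for each linear map $\phi$ in a suitable family $\F_q^{(h-1)k+r}\to$ something — more precisely, for each of the $q^{(h-1)k}$ "graph-type" subspaces obtained by shifting $W'$ by a homomorphism $W' \to W$ whose image avoids $U$ in the right way — one gets a $k$-subspace transversal to $W$. The careful bookkeeping is to count exactly how many such subspaces can be taken pairwise disjoint and disjoint from all of $\mC'$: one expects to gain $q^{(h-1)k}$ new spaces but must subtract off the $q^k+1$ spaces of the spread $\mS$ that are "used up" intersecting $U \subseteq W$, or rather one keeps $q^k$ of them and loses control of one, which is precisely what produces the $-q^r+1$ correction term after summing the geometric-like recursion $(q^{(h-1)k+r}-q^r)/(q^k-1) + q^{(h-1)k} - (\text{small loss})$ and checking it telescopes to $(q^n-q^r)/(q^k-1)-q^r+1$.

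The main obstacle, and the step deserving the most care, is the transversality/disjointness argument: guaranteeing that the newly constructed $k$-subspaces (the graphs of maps into $W$) are pairwise disjoint AND each disjoint from every subspace of the inherited partial spread $\mC'$ inside $W$. Disjointness from $\mC'$ is automatic once a new subspace meets $W$ only in $\{0\}$, so the real work is (i) arranging the graph construction so that all new subspaces are transversal to $W$, and (ii) counting the mutual intersections among them correctly — two distinct graphs $\{w' + \phi(w') : w' \in W'\}$ and $\{w' + \psi(w') : w'\in W'\}$ intersect nontrivially exactly when $\phi - \psi$ has nontrivial kernel, so one needs the relevant family of maps to differ by automorphisms, which is where the spread $\mS$ of the $2k$-space enters and where the exact count $q^k$ (not $q^k+1$) comes from. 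Once the combinatorial count is pinned down, verifying that the recursion $f(h) = f(h-1) + q^{(h-1)k} \cdot q^r \cdot (\text{factor})$ solves to the claimed closed form is a routine induction on $h$, and I would relegate it to a short computation at the end rather than belabor it. The inequality $r \le k-1$ is used only to ensure the complementary decomposition $\F_q^n = W \oplus W'$ with $\dim W' = k$ is possible and that $\dim W = (h-1)k+r$ still has the form needed for the inductive hypothesis.
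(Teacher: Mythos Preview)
Your recursive strategy is in the spirit of Beutelspacher's original argument, but it is \emph{not} the route the paper takes: the paper proves the bound by exhibiting the explicit partial spread $\mC_q(k,n;p,p')$ of Theorem~\ref{th} (block matrices built from companion matrices of irreducible polynomials of degrees $k$ and $k+r$) and then counting its elements in Proposition~\ref{numb}. That construction is direct, non-recursive, and the disjointness is handled in one stroke by the field structure of $\F_q[P]$ and $\F_q[P']$; it also yields the matrix description on which the decoding algorithm of Section~\ref{dec} is based. Your approach, if carried out correctly, would give only an existence proof with no such algorithmic payoff.

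More importantly, the details of your inductive step do not add up. Passing from $h-1$ to $h$ you must add exactly
\[
\frac{q^{hk+r}-q^r}{q^k-1}-\frac{q^{(h-1)k+r}-q^r}{q^k-1}=q^{(h-1)k+r}
\]
new $k$-subspaces, not $q^{(h-1)k}$ as you write. A single $k$-spread of the $2k$-dimensional space $W'\oplus U$ contributes only $q^k$ subspaces transversal to $W$, which is far too few; you need instead a family of $q^{(h-1)k+r}$ linear maps $\phi:W'\to W$ whose pairwise differences are all injective (equivalently, an MRD code of $k\times((h-1)k+r)$ matrices with minimum rank distance $k$). This exists, but it is a genuinely different ingredient from the $2k$-spread you invoke, and it is exactly what the paper's companion-matrix blocks $A_{i+1},\dots,A_{h-1},A_{(k)}$ encode in one shot. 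Finally, the correction $-q^r+1$ does not arise from ``losing one of the $q^k+1$ spread members'' as you suggest; with the correct count the recursion is simply $f(h)=f(h-1)+q^{(h-1)k+r}$ with base case $f(1)=1$, and the $-q^r+1$ is already present at $h=1$ and carried through unchanged.
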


\begin{remark}
 An alternative proof of Lemma \ref{low} is given in \cite{EV}, Theorem 11. For
interesting
discussions on the sharpness of the bound see \cite{FinGeo1} and \cite{FinGeo2}.
\end{remark}

  Here we introduce a construction for
partial spread codes whose size attains the lower bound of Lemma \ref{low}.
Notice that
the vector spaces of the partial spread are given as row spaces of appropriate
easy-computable
matrices.

\begin{lemma}[\cite{LN}, Ch. 2.5]\label{inver}
 Let $q$ be a prime power and let $\F_q$ be the finite field with $q$ elements.
Choose an irreducible monic polynomial $p \in \F_q[x]$ of degree $k \ge 1$ and
write
$p=\sum_{i=0}^k p_ix^i$.  Define the \textbf{companion matrix} of $p$ 
by
$$\mbox{\textbf{M}}(p):=\begin{bmatrix}
   0 & 1 & 0 & \cdots & 0 \\
0 & 0 & 1 &  & 0 \\
\vdots & & & \ddots & \vdots \\
0 & 0 & 0 &  & 1 \\
-p_0 & -p_1 & -p_2 & \cdots &-p_{k-1} 
  \end{bmatrix}.$$
The $\F_q$-algebra $\F_q[P]$ is a finite field with $q^k$ elements.
\end{lemma}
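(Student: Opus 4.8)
Here $P$ denotes the companion matrix $\mathbf{M}(p)$, and $\F_q[P]$ is the $\F_q$-subalgebra of the ring of $k \times k$ matrices over $\F_q$ that it generates. The plan is to show that $\F_q[P]$ is isomorphic, as an $\F_q$-algebra, to the quotient ring $\F_q[x]/(p)$, and then to conclude from the irreducibility of $p$.

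First I would introduce the evaluation homomorphism $\varphi \colon \F_q[x] \to \F_q[P]$ of $\F_q$-algebras sending $x$ to $P$; it is surjective by the very definition of $\F_q[P]$. The one genuine computation is to verify that $p(P) = 0$, i.e.\ that $p \in \ker\varphi$. For this I would compute the characteristic polynomial $\det(xI_k - P)$ by cofactor expansion along the first column together with induction on $k$, obtaining exactly $p(x)$; the Cayley--Hamilton theorem then yields $p(P) = 0$. (Alternatively one can check $p(P)=0$ directly by tracking the action of the powers of $P$ on the standard basis vectors.) I do not expect any real difficulty here: this is the routine fact underlying the terminology ``companion matrix''.

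It then remains to identify $\ker\varphi$. Since $\varphi(1) = I_k \neq 0$ we have $\ker\varphi \neq \F_q[x]$, and since $p$ is irreducible the ideal $(p)$, which is contained in $\ker\varphi$ by the previous step, is maximal; hence $\ker\varphi = (p)$. Therefore $\varphi$ induces an isomorphism of $\F_q$-algebras $\F_q[x]/(p) \cong \F_q[P]$. As $(p)$ is a maximal ideal, $\F_q[x]/(p)$ is a field, and so is $\F_q[P]$; and since $1, x, \dots, x^{k-1}$ is an $\F_q$-basis of $\F_q[x]/(p)$, both rings have exactly $q^k$ elements. This proves the lemma.
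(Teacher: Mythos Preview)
Your argument is correct and is the standard proof of this classical fact: the evaluation map $\F_q[x]\to\F_q[P]$ is surjective with kernel $(p)$ (via Cayley--Hamilton and irreducibility of $p$), so $\F_q[P]\cong\F_q[x]/(p)$ is a field of cardinality $q^k$. There is nothing to compare against, however: the paper does not supply its own proof of this lemma but simply quotes it from Lidl--Niederreiter, \emph{Introduction to finite fields and their applications}, Ch.~2.5. Your write-up would serve perfectly well as a self-contained justification in place of that citation.
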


\begin{notation}
Let $V$ be a vector space over a field $\F$ and let $S \subseteq V$ be any
subset. The vector space generated by $S$, i.e.,
the smallest vector subspace of $V$ containing $S$, is denoted by $\langle S
\rangle$.
We always have $\dim_{\F} \langle S \rangle \le |S|$.
\end{notation}

\begin{lemma}\label{lll}
 Let $V$ be a finite-dimensional vector space over a field $\F$. Let $D
\subseteq V$ be any subset
 and set
 $d:=\dim_{\F} \langle D \rangle$. Choose a finite subset $S \subseteq D$. Then
$\dim_{\F}
 \langle D \setminus S \rangle \ge d-|S|$.
\end{lemma}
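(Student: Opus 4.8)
The plan is to induct on the size of the finite subset $S$. The base case $|S| = 0$ is immediate, since then $D \setminus S = D$ and $\dim_{\F} \langle D \setminus S \rangle = d = d - |S|$. For the inductive step, it suffices to handle the case of removing a single element, i.e., to show that for any $x \in D$ one has $\dim_{\F} \langle D \setminus \{x\} \rangle \ge \dim_{\F} \langle D \rangle - 1$; the general statement then follows by removing the elements of $S$ one at a time, each removal dropping the dimension by at most $1$.

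So fix $x \in D$ and set $W := \langle D \setminus \{x\} \rangle$. Then $\langle D \rangle = \langle W \cup \{x\} \rangle$, which is the span of $W$ together with the single vector $x$. Hence $\langle D \rangle$ is either equal to $W$ (if $x \in W$) or has dimension exactly $\dim_{\F} W + 1$ (if $x \notin W$). In either case $\dim_{\F} \langle D \rangle \le \dim_{\F} W + 1$, which rearranges to $\dim_{\F} W \ge \dim_{\F} \langle D \rangle - 1$, as desired.

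I do not expect any serious obstacle here; the only minor point to be careful about is the bookkeeping in the induction, namely that after removing the first element of $S$ the set $D \setminus \{x\}$ is still a subset of $V$ to which the single-element bound applies, so one can iterate. One could also phrase the argument non-inductively by choosing a basis of $\langle D \setminus S \rangle$ and extending it to a spanning set of $\langle D \rangle$ using at most $|S|$ of the vectors in $S$, but the inductive formulation is cleaner and avoids any appeal to the (possibly infinite-dimensional, though here it is assumed finite) structure of $V$ beyond the elementary fact that adjoining one vector raises the dimension by at most one.
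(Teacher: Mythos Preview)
Your proof is correct. The paper argues slightly differently: rather than inducting on $|S|$, it applies the Grassmann dimension formula in one shot. From $D=(D\setminus S)\cup S$ one gets $\langle D\setminus S\rangle+\langle S\rangle\supseteq\langle D\rangle$, hence
\[
\dim_\F\langle D\setminus S\rangle+\dim_\F\langle S\rangle\;\ge\;d+\dim_\F\big(\langle D\setminus S\rangle\cap\langle S\rangle\big)\;\ge\;d,
\]
and then $\dim_\F\langle S\rangle\le|S|$ finishes. Your inductive version is in effect the $|S|=1$ case of this computation iterated, so the underlying idea is the same; the paper's formulation is marginally shorter since it avoids the bookkeeping of the induction, while yours has the small pedagogical advantage of isolating the elementary fact that adjoining a single vector raises the dimension by at most one.
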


\begin{proof}
 Since $D=(D\setminus S) \cup S$ we have $\langle D\setminus S \rangle + \langle
S \rangle \supseteq
 \langle (D\setminus S) \cup S \rangle = \langle D \rangle$.
 It follows $$\dim_\F \langle D\setminus S \rangle +\dim_\F \langle S \rangle
\ge
 d+\dim_{\F} \langle D \setminus S \rangle \cap \langle S \rangle.$$
 Since $\dim_\F \langle S \rangle \le |S|$ we conclude $\dim_\F \langle
D\setminus S \rangle +|S| \ge d$.
\end{proof}

\begin{theorem}\label{th}
 Let $q$ be a prime power and let $\F_q$ be the finite field with $q$ elements.
Choose integers
$1 \le k < n$ and write $n=hk+r$ with $0 \le r \le k-1$. Assume $h \ge 2$.
Let $p,p' \in \F_q[x]$ be  two irreducible monic polynomials of 
degree $k$ and $k+r$ respectively, and let $P:=\mbox{\textbf{M}}(p)$,
$P':=\mbox{\textbf{M}}(p')$
be their companion matrices.
For any $1 \le i \le h-1$ set $$\mathcal{M}_i(p,p'):= \left\{ \begin{bmatrix}
  0_k & \cdots & 0_k & I_k & A_{i+1} & \cdots & A_{h-1} & A_{(k)} 
  \end{bmatrix} \ : \ A_{i+1},..., A_{h-1} \in \F_q[P], \ A \in \F_q[P']
\right\},$$ where $0_k$ is the $k \times k$ matrix with zero entries, 
$I_k$ is the $k \times k$ identity matrix, and $A_{(k)}$ denotes the last $k$
rows of $A$.
The set  
$$\mC:= \bigcup_{i=1}^{h-1}\left\{ \mbox{rowsp} (M) : M \in \mathcal{M}_i(p,p')
\right\}
\cup \left\{ \mbox{rowsp} \begin{bmatrix} 0_k & \cdots & 0_k & 0_{k \times r} &
I_k
 \end{bmatrix} \right\}$$
is a partial spread code in $\F_q^n$ of dimension $k$. In particular, the minimum
distance of $\mC$ is $2k$.
\end{theorem}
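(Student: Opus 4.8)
The plan is to verify the three defining properties directly: that every element of $\mC$ is a $k$-dimensional subspace of $\F_q^n$, that any two distinct elements of $\mC$ meet only in $\{0\}$, and that $|\mC|$ reaches the Beutelspacher lower bound of Lemma~\ref{low}; the last point is not strictly needed to conclude that $\mC$ is a partial spread code, but it is what makes the construction interesting and should be recorded. Throughout I would identify $\F_q[P]\cong\F_{q^k}$ and $\F_q[P']\cong\F_{q^{k+r}}$ via Lemma~\ref{inver}, and think of each generator matrix in $\mathcal{M}_i(p,p')$ as a block row of length $n=hk+r$, with $h-1$ full $k\times k$ blocks (one of them the pivot $I_k$, the earlier ones zero) followed by a final $k\times(k+r)$ slab $A_{(k)}$, the bottom $k$ rows of some $A\in\F_q[P']$.

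First I would check the dimension count. Each matrix $M\in\mathcal{M}_i(p,p')$ has a pivot block $I_k$, so its $k$ rows are $\F_q$-linearly independent and $\dim\operatorname{rowsp}(M)=k$; the extra summand $\{\operatorname{rowsp}[\,0_k\ \cdots\ 0_k\ 0_{k\times r}\ I_k\,]\}$ plainly has dimension $k$ as well. Hence $\mC\subseteq\mG_q(k,n)$. Next, the cardinality: the maps sending $(A_{i+1},\dots,A_{h-1},A)$ to $M$ are injective, so $|\mathcal{M}_i(p,p')|=q^{k(h-1-i)}\cdot q^{k+r}$, and summing a geometric series over $i=1,\dots,h-1$ and adding the one extra space gives
$$|\mC|=q^{k+r}\sum_{i=1}^{h-1}q^{k(h-1-i)}+1=q^{k+r}\cdot\frac{q^{k(h-1)}-1}{q^k-1}+1=\frac{q^{hk+r}-q^{k+r}}{q^k-1}+1=\frac{q^n-q^r}{q^k-1}-q^r+1,$$
which is exactly the bound of Lemma~\ref{low}; I would need the standing hypothesis $h\ge 2$ here so that the index set $\{1,\dots,h-1\}$ is nonempty and $\mC$ has at least two elements.

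The heart of the argument, and the main obstacle, is the pairwise trivial-intersection property, equivalently $d(\mC)=2k$. Take two distinct generator matrices $M,N$; a common nonzero vector in $\operatorname{rowsp}(M)\cap\operatorname{rowsp}(N)$ corresponds to row vectors $u,v\in\F_q^k\setminus\{0\}$ with $uM=vN$. If $M$ and $N$ come from different blocks $\mathcal{M}_i,\mathcal{M}_j$ with $i<j$ (or one of them is the last special space), then comparing the block in pivot position $i$ forces $u=0$ on the nose — the pivot of $M$ sits where $N$ has a zero block — a contradiction; so distinct pivot positions are incompatible. The genuinely delicate case is $M,N\in\mathcal{M}_i(p,p')$ with the same pivot index $i$: there $uM=vN$ already forces $u=v$ from the $I_k$ block, and then equality in the intermediate blocks gives $u(A_m-B_m)=0$ for $m=i+1,\dots,h-1$ while equality in the final slab gives $u$ annihilating the bottom $k$ rows of $A-B\in\F_q[P']$. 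Since $\F_q[P]$ is a field, $u(A_m-B_m)=0$ with $u\ne0$ forces $A_m=B_m$; and for the last block one has to argue that if $u$ (viewed suitably) kills the bottom $k$ rows of a nonzero element of the field $\F_q[P']$ then $u=0$ — this uses that a nonzero element of $\F_q[P']$ is an invertible $(k+r)\times(k+r)$ matrix, so its bottom $k$ rows are $\F_q$-independent and cannot be annihilated by a nonzero $u\in\F_q^k$ acting on the left. Chasing these implications in the right order yields $u=v=0$, contradiction, so every pairwise intersection is trivial; combined with constant dimension $k$ this gives $d(\mC)=2k$, and Lemma~\ref{lll} is the tool I would invoke to make the rank estimates on the truncated sets of rows fully rigorous.
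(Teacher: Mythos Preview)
Your proof is correct and follows essentially the same strategy as the paper: the crux in both is that a nonzero element of the field $\F_q[P']$ is invertible, hence its bottom $k$ rows are linearly independent; the paper packages this via the rank of the stacked matrix $\begin{bmatrix}M_1\\ M_2\end{bmatrix}$ and an appeal to Lemma~\ref{lll}, whereas you argue directly with a common vector $uM=vN$, but the content is identical. One small point to tighten: your cardinality count tacitly assumes that $A\mapsto A_{(k)}$ is injective on $\F_q[P']$, which is exactly the same observation again (and is what the paper isolates separately as Proposition~\ref{numb}).
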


\begin{proof}
 Choose matrices $M_1 \neq M_2 \in \mathcal{M}_i(p,p')$ and set
$V_1:=\mbox{rowsp}(M_1)$,
 $V_2:=\mbox{rowsp}(M_2)$.  
Since by definition $d(V_1,V_2)=2k-2\dim(V_1 \cap V_2)$, we have
$d(V_1,V_2)=2k$ if and only if
$$\mbox{rk} \begin{bmatrix} M_1 \\ M_2 \end{bmatrix}=2k.$$
Since $M_1 \neq M_2$, it is possible to find
either in
$\begin{bmatrix} M_1 \\ M_2 \end{bmatrix}$, or in $\begin{bmatrix} M_2 \\ M_1
\end{bmatrix}$,
a submatrix in one of the following three forms:
$$N_1:= \begin{bmatrix} I_k  & B \\ 0_k & I_k \end{bmatrix}, \ \ \ \ \
N_2:= \begin{bmatrix} I_k  & B_1 \\ I_k & B_2 \end{bmatrix}, \ \ \ \ \
N_3:= \begin{bmatrix} I_k  & X_{(k)} \\ I_k & Y_{(k)} \end{bmatrix},$$
 with $B, B_1 \neq B_2 \in \F_q[P]$ and $X \neq Y \in \F_q[\tilde{P}]$. Let us
compute the
ranks of such matrices case by case. The rank of $N_1$ is easily computed as
$$\dim_{\F_q} \mbox{rowsp} \begin{bmatrix} I_k & B \end{bmatrix} +
\dim_{\F_q} \mbox{rowsp} \begin{bmatrix} 0_k & I_k \end{bmatrix}- \dim_{\F_q}
\left( \mbox{rowsp} \begin{bmatrix} I_k & B \end{bmatrix} \cap \mbox{rowsp} 
\begin{bmatrix} 0_k & I_k \end{bmatrix} \right)=2k.$$
The rank of $N_2$ is equal to the rank of 
$$\begin{bmatrix} I_k  & B_1 \\ 0_k & B_2-B_1 \end{bmatrix}.$$ 
Since $B_1 \neq B_2$, by Lemma \ref{inver} we get that $B_2-B_1$ is an
invertible matrix, hence
$$\det \begin{bmatrix} I_k  & B_1 \\ 0_k & B_2-B_1 \end{bmatrix} =\det(B_2-B_1)
\neq 0.$$
It follows that $\mbox{rk} (N_2)=2k$.
In order to study the latter case, consider the
$2(k+r) \times 2(k+r)$ matrix $$H:=\begin{bmatrix} I_{k+r} & X \\ I_{k+r} & Y
\end{bmatrix}.$$
By using the same argument 
as above, we get $\mbox{rk}(H)=2(k+r)$. Delete from $H$ the rows from one to $r$
and from
$k+r+1$ to $k+2r$. A matrix of size $2k \times (2k+2r)$, say $\tilde{H}$, is
obtained.
We observe that the rows of $\tilde{H}$ are exactly the rows of
 $N_3$ with $r$ extra zeroes in
the beginning. In particular, 
$\mbox{rk}(\tilde{H})= \mbox{rk} (N_3)$. 
By Lemma \ref{lll} we get $\mbox{rk}(\tilde{H}) \ge 2(k+r)-2r =2k$ and so
$\mbox{rk} (N_3)=2k$.
To conclude the proof, take a matrix $M_1 \in \mathcal{M}_i(p,p')$ and set
$M_2:=\begin{bmatrix} 0_k & \cdots & 0_k & 0_{k \times r} & I_k
 \end{bmatrix}$. It follows $$\mbox{rk} \begin{bmatrix} M_1 \\ M_2
\end{bmatrix}=2k.$$
These arguments prove that that $\mC$ is a set of $k$-dimensional vector
subspaces of $\F_q^n$, whose pairwise intersections are trivial.
\end{proof}

\begin{notation}\label{nota}
The partial spread code $\mC$ defined in the statement of Theorem \ref{th} will
be denoted by 
$\mC_q(k,n;p,p')$.
Since, for any code $\mC \subseteq \mG_q(k,n)$, $\mathcal{C}^\bot \subseteq
\mG_q(n-k,n)$
is a code with the same cardinality and the same distance distribution
as $\mC$ (see \cite{KK1}, Section III), we always assume $1 \le k \le n/2$.
\end{notation}

\begin{remark}
 Partial spread codes provide a generalization of spread codes (see
\cite{GR}, Definition 2).
 Indeed, it is easily seen that spread codes are obtained by taking $r:=0$ and
$p':=p$ in the 
statement of Theorem \ref{th}. On the other hand, partial spread codes exist 
also when $k$ does not divide $n$.
\end{remark}

\begin{example}
Here we construct a partial spread code of lenght $7$ and dimension $2$ over the
binary field $\F_2$. Let $(q,k,n):=(2,2,7)$
and observe that $n \equiv 1 \mod k$. Hence, in the notation of Theorem
\ref{th}, $r=1$. Take irreducible monic polynomials 
$p:=x^2+x+1, p':=x^3+x+1 \in \F_2[x]$ of degree $k$ and $k+r$, respectively. The
companion matrices of $p$ and $p'$ are easily computed as follows:
$$P:= \mbox{\textbf{M}}(p)= \begin{bmatrix} 0 & 1 \\ 1 & 1 \end{bmatrix}, \ \ \
\ 
\ P':= \mbox{\textbf{M}}(p')= \begin{bmatrix} 0 & 1 & 0 \\ 0 & 0 & 1 \\ 1 & 1 &
0
\end{bmatrix}.$$
As a consequence, the elements of $\mC_2(2,7;p,p')$ are the row spaces of all
the matrices in the following forms:
$$\begin{bmatrix} \begin{array}{cc} 1 & 0 \\ 0 & 1 \end{array} & A_1 & A_{(2)}
\end{bmatrix}, \ \ 
\begin{bmatrix} \begin{array}{cccc} 0 & 0 & 1 & 0 \\ 0 & 0 & 0 & 1 \end{array} &
B_{(2)} \end{bmatrix}, \ \ 
\begin{bmatrix} 0 & 0 & 0 & 0 & 0 & 1 & 0 \\ 0 & 0 & 0 & 0 & 0 & 0 & 1
\end{bmatrix},$$
where $A_1$ is any matrix in $\F_q[P]$ and $A_{(2)}, B_{(2)}$ denote the last
two rows
of any $A, B \in \F_q[P']$.
It can be checked that $\mC_2(2,7;p,p')$ has $2^2 \cdot 2^3+2^3+1=41$ elements.
The cardinality
computation will be easily generalized in Proposition
\ref{numb}.
\end{example}

\section{Some properties of partial spread codes}\label{sec3}

In this section we discuss some relevant properties of partial spread codes
introduced in Theorem \ref{th}. In particular,  Proposition
\ref{numb} provides their size and Proposition \ref{max} proves their
maximality,
 with respect to inclusion, as collections of $k$-dimensional vector
subspaces of $\F_q^n$ with trivial pairwise intersections.

\begin{proposition}\label{numb}
 Let $\mC:= \mC_q(k,n;p,p')$ be a partial spread code. The size of $\mC$ is
given by the
formula 
$$|\mC| = \frac{q^n-q^r}{q^k-1}-q^r+1.$$
\end{proposition}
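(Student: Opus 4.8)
The plan is to enumerate $\mC$ directly from its description in Theorem~\ref{th}, where it appears as the union of the $h-1$ families $\mathcal{F}_i := \{\mbox{rowsp}(M) : M \in \mathcal{M}_i(p,p')\}$, $1 \le i \le h-1$, together with the one extra subspace $W := \mbox{rowsp}\begin{bmatrix} 0_k & \cdots & 0_k & 0_{k \times r} & I_k \end{bmatrix}$. I would (a) count each $\mathcal{F}_i$, (b) show that $\mathcal{F}_1,\dots,\mathcal{F}_{h-1}$ and $\{W\}$ are pairwise disjoint, and (c) sum up and simplify.

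For step (a): a matrix $M \in \mathcal{M}_i(p,p')$ is determined by the $h-1-i$ free blocks $A_{i+1},\dots,A_{h-1} \in \F_q[P]$ and by the last $k$ rows $A_{(k)}$ of a matrix $A \in \F_q[P']$. By Lemma~\ref{inver} we have $|\F_q[P]| = q^k$ and $|\F_q[P']| = q^{k+r}$, and I would observe that $A \mapsto A_{(k)}$ is injective on $\F_q[P']$: it is $\F_q$-linear, and if $A_{(k)} = 0$ then, since $k \ge 1$, the matrix $A$ has a zero row, hence is not invertible, hence --- being an element of the field $\F_q[P']$ --- is $0$. This gives $|\mathcal{M}_i(p,p')| = (q^k)^{h-1-i}\, q^{k+r}$. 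Next I would note that every $M \in \mathcal{M}_i(p,p')$ is already in reduced row echelon form (its first $i-1$ blocks vanish and the $I_k$ in block $i$ supplies $k$ clean pivot columns), so $M \mapsto \mbox{rowsp}(M)$ is injective on $\mathcal{M}_i(p,p')$, and therefore $|\mathcal{F}_i| = (q^k)^{h-1-i}\, q^{k+r}$.

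For step (b): I would attach to a subspace $V \subseteq \F_q^n$ the index $\mu(V)$ of the first coordinate block onto which $V$ has nonzero projection, the blocks being $1,\dots,h-1$ of width $k$ and block $h$ of width $k+r$ (so their widths sum to $n = hk+r$). Reading off the leading zero blocks and the identity block shows $\mu(V) = i$ for every $V \in \mathcal{F}_i$, whereas $\mu(W) = h$; since $1,\dots,h$ are distinct, the families $\mathcal{F}_1,\dots,\mathcal{F}_{h-1},\{W\}$ are pairwise disjoint.

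Combining (a) and (b) and using $n = hk+r$,
$$|\mC| = 1 + \sum_{i=1}^{h-1} (q^k)^{h-1-i}\, q^{k+r} = 1 + q^{k+r}\sum_{j=0}^{h-2} q^{kj} = 1 + q^{k+r}\cdot\frac{q^{k(h-1)}-1}{q^k-1} = 1 + \frac{q^n - q^{k+r}}{q^k-1},$$
and a one-line manipulation rewrites this last expression as $(q^n-q^r)/(q^k-1) - q^r + 1$, which is the claim. The main (and only really delicate) point is the double-counting check underlying (a)--(b): inside a single family it is the uniqueness of the reduced row echelon representative of a subspace, and across families (and versus $W$) it is the invariant $\mu$; the geometric sum and the closing algebraic identity are routine.
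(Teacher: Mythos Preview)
Your proof is correct and follows the same overall route as the paper's: count the defining matrices, invoke uniqueness of reduced row echelon form to pass to rowspaces without double-counting, and check that $A \mapsto A_{(k)}$ is injective on $\F_q[P']$. The only difference is in this last step --- the paper reuses the rank computation from Theorem~\ref{th} (if $X\neq Y$ then $\begin{bmatrix} I_k & X_{(k)} \\ I_k & Y_{(k)} \end{bmatrix}$ would have rank $2k$), whereas your ``a zero row forces a field element to be zero'' is a more direct, self-contained argument --- and your block-index invariant $\mu$ makes explicit the cross-family disjointness that the paper absorbs into the same global RREF remark.
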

\begin{proof}
We follow the notation of Theorem \ref{th}. 
Let $X,Y$ be matrices in $\F_q[P']$ and 
assume $X_{(k)}=Y_{(k)}$.
If $X \neq Y$ we have
$$\mbox{rk} \begin{bmatrix}
            I_{k+r} & X \\ I_{k+r} & Y
           \end{bmatrix} =2(k+r)
$$ and so, as in the proof of Theorem \ref{th},
$$\mbox{rk} \begin{bmatrix}
            I_{k} & X_{(k)} \\ I_{k} & Y_{(k)}
           \end{bmatrix} =2k
,$$ which yields a contradiction. It follows that $X=Y$. Notice that
the matrices in the statement of Theorem \ref{th} are given in
row-reduced echelon form, which is canonical (see \cite{CM}, Chapter 2.2). As a
consequence, the size of $\mC$ is
easily computed as  
 $$|\mC|= 1+q^{k+r} \sum_{i=0}^{h-2} q^{ki} = (q^n-q^r)/(q^k-1)-q^r+1,$$
as claimed.
\end{proof}

\begin{corollary}
 Let $\mC:=\mC_q(k,n;p,p')$ be a partial spread code. Denote by $\mA_q(k,n,2k)$
the largest possible size
 of a network code in $\mG_q(k,n)$ of minimum distance $2k$. Let $r$ be the
remainder obtained dividing $n$ by $k$. Then
 $$\mA_q(k,n,2k)-|\mC| \le q^r-1.$$
\end{corollary}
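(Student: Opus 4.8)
The plan is to combine the upper bound of Lemma~\ref{bound} with the exact cardinality computed in Proposition~\ref{numb}; no new ideas are needed.

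First I would record the (elementary) observation that a network code $\mathcal{D} \subseteq \mG_q(k,n)$ has minimum distance $2k$ if and only if it is a partial spread code. Indeed, for distinct $U,V \in \mG_q(k,n)$ one has $d(U,V) = \dim(U)+\dim(V)-2\dim(U\cap V) = 2k - 2\dim(U \cap V) \le 2k$, so the condition $d(\mathcal{D}) = 2k$ is equivalent to $U \cap V = \{0\}$ for all distinct $U,V \in \mathcal{D}$. Consequently Lemma~\ref{bound} applies to any code attaining $\mA_q(k,n,2k)$, which yields
\[ \mA_q(k,n,2k) \le \frac{q^n - q^r}{q^k-1}. \]

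Then I would substitute the value $|\mC| = (q^n-q^r)/(q^k-1) - q^r + 1$ provided by Proposition~\ref{numb} and subtract, obtaining
\[ \mA_q(k,n,2k) - |\mC| \le \frac{q^n-q^r}{q^k-1} - \left( \frac{q^n-q^r}{q^k-1} - q^r + 1 \right) = q^r - 1, \]
which is the claimed inequality.

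Since both ingredients are already established in the excerpt, there is essentially no obstacle in this argument; the only step worth spelling out is the identification of minimum-distance-$2k$ constant dimension codes with partial spread codes, and this is immediate from the distance formula together with the fact, noted after the definition of constant dimension codes, that such codes have even minimum distance (here bounded above by $2k$).
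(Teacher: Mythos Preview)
Your proposal is correct and follows exactly the paper's approach: the paper's proof is the single line ``Combine Lemma~\ref{bound} and Proposition~\ref{numb}.'' Your added observation that a constant dimension code in $\mG_q(k,n)$ has minimum distance $2k$ if and only if it is a partial $k$-spread is the (implicit) reason Lemma~\ref{bound} bounds $\mA_q(k,n,2k)$, so nothing more is needed.
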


\begin{proof}
 Combine Lemma \ref{bound} and Proposition \ref{numb}.
\end{proof}

\begin{remark}
In \cite{EV} T. Etzion and A. Vardy provide a construction of partial spread codes
(see the proof of Theorem 11). Their codes have the same cardinality
and minimum distance as $\mC_q(k,n;p,p')$. The main contribution of
this paper is introducing a block-matrices description of partial
spread codes. Thanks to our constrution, in Section \ref{dec} we are able to provide 
an efficient decoding algorithm for partial spread codes. In the next
proposition, we discuss the maximality of partial spread codes.
\end{remark}

\begin{proposition}\label{max}
 Let $\mN_q(k,n,2k)$ be the set of all the possible network codes $\mC \subseteq
\mG_q(k,n)$
of minimum distance $2k$. Let $\mC:= \mC_q(k,n;p,p')$ be a partial spread code.
Then $\mC$ is a
maximal element of $\mN_q(k,n,2k)$ with respect to inclusion.
\end{proposition}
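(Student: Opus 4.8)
The plan is to argue by contradiction. Suppose $\mC$ is not maximal in $\mN_q(k,n,2k)$; then there is a subspace $W\in\mG_q(k,n)$ with $W\notin\mC$ such that $\mC\cup\{W\}$ still has minimum distance $2k$. Since $d(W,V)=2k-2\dim(W\cap V)$, this is equivalent to $W\cap V=\{0\}$ for every $V\in\mC$; in other words, \emph{no nonzero vector of $W$ lies in any element of $\mC$}. So the whole problem reduces to describing which vectors of $\F_q^n$ lie in some element of $\mC$ (are ``covered'' by $\mC$), and then showing that every $k$-dimensional subspace must contain a covered vector.

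To do this I would use the block decomposition of $\F_q^n$ underlying Theorem~\ref{th}, writing a vector as $v=(v_1,\dots,v_{h-1},v_h)$ with $v_1,\dots,v_{h-1}\in\F_q^k$ and $v_h\in\F_q^{k+r}$. For $M\in\mathcal{M}_i(p,p')$, the rowspace of $M$ is $\{v_iM:v_i\in\F_q^k\}$, so $v$ lies in it precisely when $v_1=\dots=v_{i-1}=0$ and there exist $A_{i+1},\dots,A_{h-1}\in\F_q[P]$ and $A\in\F_q[P']$ with $v_iA_j=v_j$ for $i<j\le h-1$ and $(0,\dots,0,v_i)A=v_h$; here I have used that multiplying $v_i$ by the last $k$ rows $A_{(k)}$ of $A$ is the same as multiplying by $A$ the length-$(k+r)$ vector obtained from $v_i$ by prepending $r$ zeros. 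Now the key point is that, by Lemma~\ref{inver}, $\F_q[P]\cong\F_{q^k}$ and $\F_q[P']\cong\F_{q^{k+r}}$ are fields acting on $\F_q^k$ and $\F_q^{k+r}$ respectively as one-dimensional vector spaces over themselves; hence, whenever $v_i\neq0$, each of the displayed linear equations in $A_j$, resp.\ in $A$, has a solution (the maps $A\mapsto v_iA$ and $A\mapsto(0,\dots,0,v_i)A$ are $\F_q$-linear bijections in that case). Taking $i$ to be the least index with $v_i\neq0$, this shows that every vector $v$ with $(v_1,\dots,v_{h-1})\neq0$ is covered by $\mC$. Equivalently, every vector \emph{not} covered by $\mC$ lies in the subspace $U\subseteq\F_q^n$ consisting of the vectors supported on the last $k+r$ coordinates.

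The contradiction is then immediate. From the first two paragraphs, $W\subseteq U$. But $\mC$ also contains the subspace $U'=\mbox{rowsp}\begin{bmatrix}0_k&\cdots&0_k&0_{k\times r}&I_k\end{bmatrix}$, that is, the set of vectors of $\F_q^n$ supported on the last $k$ coordinates, which is a $k$-dimensional subspace of the $(k+r)$-dimensional space $U$. Since $\dim W=k$ and $r\le k-1$, the dimension formula gives $\dim(W\cap U')\ge 2k-(k+r)=k-r\ge1$, so $W\cap U'\neq\{0\}$, contradicting $W\cap V=\{0\}$ for all $V\in\mC$. Hence $\mC$ is maximal.

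I expect the only mildly delicate part to be the bookkeeping in the second paragraph: correctly matching the block pattern of the matrices in $\mathcal{M}_i(p,p')$ with the ``first nonzero block'' of a given vector, and verifying the identity $v_iA_{(k)}=(0,\dots,0,v_i)A$. Everything else---the surjectivity statements coming from $\F_q[P]$ and $\F_q[P']$ being fields, and the final use of the dimension formula via $r<k$---is routine.
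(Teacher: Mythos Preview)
Your proof is correct and follows the same overall architecture as the paper's: show that the nonzero vectors \emph{not} covered by the codewords all lie in the $(k+r)$-dimensional coordinate subspace on the last $k+r$ positions, and then derive a contradiction from the fact that two $k$-subspaces with trivial intersection cannot both sit inside a $(k+r)$-space when $r<k$.

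The one genuine difference is in how that coverage statement is obtained. The paper argues by \emph{counting}: setting $\overline{\mC}:=\mC\setminus\{U'\}$ and $S:=\bigcup\overline{\mC}\setminus\{0\}$, it invokes Proposition~\ref{numb} to get $|S|=(q^k-1)|\overline{\mC}|=q^n-q^{k+r}$, observes the obvious inclusion $X\subseteq\F_q^n\setminus S$ where $X$ is your $U$, and concludes $X=\F_q^n\setminus S$ by equality of cardinalities. You instead give a \emph{constructive} argument: given $v$ with first nonzero block $v_i$ ($i\le h-1$), you exhibit the codeword containing $v$ by solving for the $A_j\in\F_q[P]$ and $A\in\F_q[P']$, using that $\F_q[P]$ and $\F_q[P']$ are fields so that $A_j\mapsto v_iA_j$ and $A\mapsto(0,\dots,0,v_i)A$ are bijections. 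Your route is more self-contained (it does not rely on the cardinality formula of Proposition~\ref{numb}) and slightly cleaner at the end, since you need only one extra subspace $W$ and compare it directly with $U'\in\mC$ via the dimension formula, whereas the paper removes $U'$ first and then needs two extra subspaces in $X$. The bookkeeping you flag---the identity $v_iA_{(k)}=(0,\dots,0,v_i)A$ and the first-nonzero-block indexing---is indeed routine and handled correctly.
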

\begin{proof}
We must prove that there is no partial $k$-spread $\mC'$ in $\F_q^n$ such that
$\mC' \supseteq \mC$ and $|\mC'| > |\mC|$.
 Write $n=hk+r$ with $0 \le r<k$ and $h \ge 2$ (see Notation \ref{nota}). 
Define the partial $k$-spread $$\overline{\mC}:=\mC \setminus \left\{
\mbox{rowsp}
 \begin{bmatrix} 0_k & \cdots & 0_k & 0_{k \times r} & I_k  \end{bmatrix}
\right\}.$$
Assume, by contradiction, that there exists a partial $k$-spread $\mC'$ in
$\F_q^n$ such that
$\mC' \supseteq \overline{\mC}$ and $|\mC'| \ge |\overline{\mC}|+2$.
Set $S:= \bigcup \overline{\mC}  \setminus \{ 0\}$.
By combining Theorem \ref{th} and Proposition \ref{numb} we easily compute
$$|\overline{\mC}|=(q^n-q^r)/(q^k-1)-q^r, \ \ \ \ \ \ \ \ |S|=(q^k-1)\cdot
|\overline{\mC}|=q^n-q^{k+r}.$$
The set $X:= \{ x \in \F_q^n : x_i=0 \mbox{ for any } i=1,...,(h-1)k\}$ is a
vector subspace
of $\F_q^n$ of dimension $k+r$. We clearly have an inclusion 
$X \subseteq \F_q^n \setminus S$. Since $$|\F_q^n \setminus
S|=q^n-(q^n-q^{k+r})=q^{k+r},$$ 
we deduce $X = \F_q^n \setminus S$, $\F_q^n =
X \sqcup S$, with $X$ a $(k+r)$-dimensional vector subspace
of $\F_q^n$. Since $\mC' \supseteq \mC \supseteq \overline{\mC}$,
$|\mC'| \ge \overline{\mC}+2$ and for any $s \in S$ there exists a $V_s \in
\overline{\mC}$
such that $s \in V_s$, we deduce the existence of two $k$-dimensional vector
subspaces $V_1,V_2 \in \mC'$
such that $V_1 \cap V_2 = \{ 0 \}$ and $V_1,V_2 \subseteq X$. Since $X$ is a
vector subspace
of $\F_q^n$ containing $V_1 \cup V_2$ and, by definition,
 $V_1+V_2$ is the smallest vector subspace of $\F_q^n$ containing both $V_1$ and
$V_2$,
we conclude $V_1+V_2 \subseteq X$. It follows $$\dim(V_1)+\dim(V_2)-\dim(V_1
\cap V_2) \le \dim(X)$$ and so
$2k \le k+r$, which is a contradiction.
\end{proof}

\begin{remark}
 Proposition \ref{max} ensures that a partial spread code $\mC_q(k,n;p,p')$
cannot be improved (as a network code
 in $\mG_q(k,n)$ of minimum distance $2k$) by adding new codewords.
\end{remark}

\section{The block structure} \label{sec4}

Here we investigate the block structure of partial spread codes introduced in
the statement of Theorem \ref{th}. This will allow us to produce an
efficient decoding algorithm, which we present in the next section.
The results of this section are a generalization of those contained
in \cite{GR2}.

\begin{lemma}\label{identit}
 Let $\mC:=\mC_q(k,n;p,p')$ be a partial spread code and let $V \in \mC$ be a
codeword, say
 $$V:=\mbox{rowsp} \begin{bmatrix}  S_1 & \cdots & S_{h-1} & S  \end{bmatrix},$$
where the 
 $S_i$'s are $k \times k$ matrices and $S$ is a $k\times (k+r)$ matrix. Let $X
\subseteq \F_q^n$
 be a $t$-dimensional vector subspace given as the row space of a matrix of the
form
 $$\begin{bmatrix}  M_1 & \cdots & M_{h-1} & M \end{bmatrix},$$ where the
$M_i$'s are $k \times k$ matrices and $M$ is a $k\times (k+r)$
matrix\footnote{Notice that
$t \le k$. This assumption is not restrictive from the following point
of view: the decoder can stop collecting incoming vectors as soon as it
receives $k$ inputs (as an alternative, $k$ linearly independent
inputs); then it can attempt to decode the collected data.}.
 If $d(V,X) <k$ then $X$ decodes to $V$. Moreover, for any $1 \le i \le h-1$ the
following two facts are equivalent:
 \begin{enumerate}
  \item[(1)] $S_i=0_k$,
  \item[(2)] $\mbox{rk}(M_i) \le (t-1)/2$.
 \end{enumerate}
\end{lemma}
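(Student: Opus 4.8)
The plan is to treat the two assertions separately. For the decoding claim, I would invoke that the subspace distance is a metric (\cite{KK1}) together with the fact that $\mC$ has minimum distance $2k$. If $V' \in \mC$ with $V' \neq V$, then $V,V'$ are distinct $k$-dimensional subspaces with trivial intersection, so $d(V,V')=2k$, and the triangle inequality gives $d(X,V') \ge d(V,V') - d(V,X) > 2k - k = k > d(V,X)$. Hence $V$ is the unique codeword closest to $X$, so a minimum distance decoder returns $V$; that is, $X$ decodes to $V$.

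For the equivalence, I would first rewrite the hypothesis. Put $w := \dim(V \cap X)$; since $d(V,X) = \dim V + \dim X - 2\dim(V\cap X) = k+t-2w$, the assumption $d(V,X)<k$ is equivalent to $2w > t$, i.e. $2w \ge t+1$. Next, for each $1 \le i \le h-1$ let $\pi_i \colon \F_q^n \to \F_q^k$ denote the projection onto the $i$-th block of $k$ coordinates and put $K_i := \ker \pi_i$. From the given matrix forms of $X$ and $V$ (in the notation of Theorem \ref{th}) one reads off $\pi_i(X) = \mbox{rowsp}(M_i)$ and $\pi_i(V) = \mbox{rowsp}(S_i)$, so $\dim \pi_i(X) = \mbox{rk}(M_i)$. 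The structural point I would record here is that each block $S_i$ with $i \le h-1$ lies in $\F_q[P]$ by construction (it is $0_k$, $I_k$, or one of the matrices $A_j$), hence by Lemma \ref{inver} it is either $0_k$ or invertible; in particular $S_i = 0_k$ is equivalent to $\pi_i(V) = \{0\}$, and $\pi_i$ restricts to an isomorphism $V \to \F_q^k$ when $S_i$ is invertible.

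For $(1) \Rightarrow (2)$: if $S_i = 0_k$ then $V \subseteq K_i$, so $V \cap X \subseteq X \cap K_i = \ker(\pi_i|_X)$, and rank--nullity for $\pi_i|_X$ yields $w \le t - \mbox{rk}(M_i)$, i.e. $\mbox{rk}(M_i) \le t-w \le (t-1)/2$ using $2w \ge t+1$. For $(2) \Rightarrow (1)$ I would argue by contraposition: if $S_i \neq 0_k$ then $S_i$ is invertible, so $\pi_i$ is injective on $V$ and hence on $V \cap X$; therefore $\mbox{rk}(M_i) = \dim \pi_i(X) \ge \dim \pi_i(V \cap X) = w \ge (t+1)/2 > (t-1)/2$, so $(2)$ fails.

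The arithmetic is light; the step I expect to need the most care is the structural observation that the block form of a codeword forces each $S_i$ ($i \le h-1$) to be $0_k$ or invertible — so that $\mbox{rk}(S_i)\in\{0,k\}$ and "$\mbox{rk}(S_i)<k$" means "$S_i=0_k$" — together with the faithful translation of $d(V,X)<k$ into the rank inequality on $M_i$ via $\pi_i$ and a kernel dimension count. Beyond this bookkeeping I do not anticipate a genuine obstacle.
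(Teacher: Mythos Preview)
Your argument is correct. The decoding claim and the direction $(2)\Rightarrow(1)$ match the paper's proof essentially line for line: the paper also argues by contraposition, observing that if $S_i\neq 0_k$ then (in the canonical matrix form of Theorem~\ref{th}) $\mbox{rk}(S_i)=k$, so $\pi_i$ is injective on $V$ and hence $\dim(V\cap X)\le\mbox{rk}(M_i)$. Your structural remark that each $S_i$ with $i\le h-1$ lies in $\F_q[P]$ and is therefore $0_k$ or invertible is exactly the fact the paper invokes as ``by definition of~$\mC$''.

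Where you genuinely diverge is in $(1)\Rightarrow(2)$. The paper splits into two cases according to whether some other block $S_j$ equals $I_k$ or all $S_j$ vanish; in each case it stacks an auxiliary matrix (either $\begin{bmatrix}0_k & I_k\\ M_i & M_j\end{bmatrix}$ or $\begin{bmatrix}0_k & 0_{k\times r}\ I_k\\ M_i & M\end{bmatrix}$), bounds its rank above by $\dim(V+X)<k+t/2$ and below by $k+\mbox{rk}(M_i)$ via a column-space argument, and derives a contradiction. Your single rank--nullity step on $\pi_i|_X$ --- from $S_i=0_k$ one has $V\subseteq K_i$, hence $w\le\dim\ker(\pi_i|_X)=t-\mbox{rk}(M_i)$ --- bypasses the case distinction and the auxiliary matrices entirely. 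The paper's route is more ``hands-on'' with explicit block matrices (in keeping with the rest of Section~\ref{sec4}), while yours is more compact; both rest on the same underlying observation that $\pi_i$ kills $V$ when $S_i=0_k$.
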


\begin{proof}
Since the minimum distance of $\mC$ is $2k$ (Theorem \ref{th}) and $d(V,X)<k$,
the space $X$ obviously decodes to $V$.
Let us prove \framebox{$(1) \Rightarrow (2)$}. Without loss of generality,
we assume that $\begin{bmatrix}  S_1 & \cdots & S_{h-1} & S  \end{bmatrix}$ is
in
row-reduced echelon form. Assume that for a fixed index $1 \le i \le h-1$ we 
have $S_i=0_k$. Since $d(V,X)<k$, we have $\dim_{\F_q} (V \cap X) >t/2$. By
definition of $\mC$,
 exactly one of the following cases occurs:
\begin{enumerate}
 \item[(a)] there exists an index $1 \le j\le h-1$ with $j \neq i$ such
 that $S_j=I_k$;
\item[(b)] $S_j=0_k$ for any $1 \le j \le h-1$.
\end{enumerate}
 In the former case, let us consider the matrix $M_{ij}$ defined by
$$M_{ij}:= \begin{bmatrix} 0_k & I_k \\ M_i & M_j \end{bmatrix}.$$
We get $\mbox{rk} (M_{ij}) \le \dim(V+X)= k+t-\dim_{\F_q}(V \cap X) < k+t/2$.
Assume by contradiction that $\mbox{rk}(M_i) > (t-1)/2$. By deleting the last
$k$
coloumns 
of $M_{ij}$ (which are linearly independent and do not lie in the space
generated
by
the first $k$)
we easily deduce the following contradiction:
$$k+(t-1)/2 < \mbox{rk} \begin{bmatrix} 0_k & I_k \\ M_i & M_j \end{bmatrix} <
k+t/2.$$
In the latter case, by definition of $\mC$, we have $V=  \mbox{rowsp}
\begin{bmatrix} 0_k & \cdots & 0_k & 0_{k \times r} & I_k
\end{bmatrix}$. Hence
$$k+(t-1)/2 < \mbox{rk} \begin{bmatrix} 0_k & 0_{k\times r} I_k \\ M_i & M
\end{bmatrix} \le 
\dim(V+X)= k+t-\dim_{\F_q}(V \cap X) < k+t/2,$$
a contradiction. Now we prove \framebox{$(2) \Rightarrow (1)$}. Assume
$\mbox{rk}(M_i) \le (t-1)/2$ for some  index $1 \le i \le h-1$. If $S_i
\neq 0_k$ then, by definition of
$\mC$, $\mbox{rk}(S_i)=k$. Denote by $\pi:\F_q^n \to \F_q^k$ the projection on
the coordinates $ki+1, ki+2,..., k(i+1)$.
Since $\mbox{rowsp}(S_i) = \pi(V)$ and $\mbox{rk}(S_i)=k$, we get that
$\pi_{|V}$ is surjective. Since $\dim_{\F_q}(V)=k$, it follows that $\pi_{|V}$
is also injective. As a consequence,
$$\dim_{\F_q}(V \cap X) = \dim_{\F_q} \pi(V \cap X) \le \dim_{\F_q}(\pi(V) \cap
\pi(X)) \le \dim_{\F_q} \pi(X) =\mbox{rk}(M_i) \le (t-1)/2,$$
which contradicts the assumption that $d(V,X)<k$.
\end{proof}

\begin{remark}\label{riassuntino}
 Lemma \ref{identit} has the following useful interpretation. Assume that a
partial spread code $\mC:=\mC_q(k,n;p,p')$ is used for random network coding and
 a $t$-dimensional vector space $X:=\mbox{rowsp} \begin{bmatrix}  M_1 & \cdots &
M_{h-1} & M \end{bmatrix}$
 is received. Assume the existence of a (unique) codeword $V \in \mC$ such that
$d(V,X)<k$
(i.e., $X$ decodes to $V$). If $\mbox{rk}(M_i) \le (t-1)/2$ for any $1 \le i \le
h-1$
 then $V=  \mbox{rowsp} \begin{bmatrix} 0_k & \cdots & 0_k & 0_{k \times r} &
I_k
\end{bmatrix}$. Otherwise, let $i$ denote the smallest integer $1 \le i \le h-1$
such that
 $\mbox{rk}(M_i)>(t-1)/2$. Then there exist unique matrices $A_{i+1},...,A_{h-1}
\in \F_q[P]$ and a unique matrix $A \in \F_q[P']$ such that
 $V= \mbox{rowsp}\begin{bmatrix} 0_k & \cdots & 0_k & I_k & A_{i+1} & \cdots &
A_{h-1} &
A_{(k)} \end{bmatrix}$, where the identity matrix $I_k$ is the $i$-th
$k \times k$ block. 
\end{remark}

\begin{lemma}\label{red}
With the setup of Remark \ref{riassuntino}, assume that $V \neq  \mbox{rowsp} 
\begin{bmatrix} 0_k & \cdots & 0_k & 0_{k \times r} & I_k
\end{bmatrix}$. For any $i+1 \le j \le h-1$ we have
$$d\left( \mbox{rowsp} \begin{bmatrix}  I_k & A_j \end{bmatrix} , \mbox{ rowsp}
\begin{bmatrix} M_i & M_j
 \end{bmatrix} \right) < k, \ \ \ \ \ d\left( \mbox{rowsp} \begin{bmatrix} I_k &
A_{(k)}
 \end{bmatrix} , \mbox{ rowsp} \begin{bmatrix} M_i & M \end{bmatrix} \right) <
k.$$
\end{lemma}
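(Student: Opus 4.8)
The plan is to reduce each of the two claimed inequalities to a statement about ranks of $2k\times 2k$ matrices and then exploit the block structure of $X$ together with the hypothesis $d(V,X)<k$. First I would recall from Remark \ref{riassuntino} that, under the standing assumption, $V=\mbox{rowsp}\begin{bmatrix} 0_k & \cdots & 0_k & I_k & A_{i+1} & \cdots & A_{h-1} & A_{(k)} \end{bmatrix}$ with the identity block in position $i$, and that $i$ is the smallest index with $\mbox{rk}(M_i)>(t-1)/2$; since $d(V,X)<k$ forces $S_i=I_k\neq 0_k$, Lemma \ref{identit} gives $\mbox{rk}(M_i)>(t-1)/2$, i.e. $\mbox{rk}(M_i)\ge t/2$. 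Because $V\cap X$ has dimension $>t/2$ and the projection $\pi$ onto the $i$-th block of coordinates is injective on $V$ (as in the proof of Lemma \ref{identit}, using $\mbox{rk}(S_i)=k$), I would observe that $\pi$ is in fact injective on $V+X$ restricted appropriately, or more directly: the projection $\rho_j$ of $\F_q^n$ onto the coordinates of blocks $i$ and $j$ sends $V$ isomorphically onto $\mbox{rowsp}\begin{bmatrix} I_k & A_j\end{bmatrix}$ and sends $X$ onto $\mbox{rowsp}\begin{bmatrix} M_i & M_j\end{bmatrix}$.

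The key step is then a dimension estimate for $\rho_j(V\cap X)$. Since $\pi=\pi_i$ is injective on $V$, it is injective on $V\cap X$; hence $\rho_j$ is injective on $V\cap X$ as well, so $\dim \rho_j(V\cap X)=\dim(V\cap X)>t/2$. On the other hand $\rho_j(V\cap X)\subseteq \rho_j(V)\cap\rho_j(X)=\mbox{rowsp}\begin{bmatrix} I_k & A_j\end{bmatrix}\cap\mbox{rowsp}\begin{bmatrix} M_i & M_j\end{bmatrix}$, and $\dim \rho_j(X)\le \dim X=t$. Therefore, writing $V':=\mbox{rowsp}\begin{bmatrix} I_k & A_j\end{bmatrix}$ (dimension $k$) and $X':=\mbox{rowsp}\begin{bmatrix} M_i & M_j\end{bmatrix}$ (dimension $\le t$),
$$d(V',X')=\dim V'+\dim X'-2\dim(V'\cap X')\le k+t-2\dim\rho_j(V\cap X)< k+t-2\cdot\frac{t}{2}=k.$$
The same argument with the coordinates of block $i$ and the final (size $k+r$) block, using that $A_{(k)}$ is the last $k$ rows of $A$ and $M$ is the last block of $X$, yields the second inequality; here one must be slightly careful that the relevant projection still restricts injectively to $V$, which again follows from $\mbox{rk}(S_i)=k$, i.e. injectivity of $\pi_i$ on $V$.

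The main obstacle I anticipate is bookkeeping rather than a genuine difficulty: one has to make sure that the projection used in the second inequality (onto blocks $i$ and the last block together) is the one whose restriction to $V$ is injective — it is, because injectivity of $\pi_i|_V$ alone already implies injectivity of any projection onto a coordinate set containing the $i$-th block — and that $\dim\rho(X)\le t$ in each case, which is immediate since $\rho$ is linear and $\dim X=t$. One should also note the degenerate possibility that $\rho_j(V\cap X)$ could a priori be strictly larger than $\dim(V\cap X)$ is impossible precisely because of injectivity, so the strict inequality $d<k$ is preserved. No use of the field structure of $\F_q[P]$ or $\F_q[P']$ is needed for this lemma — only the combinatorics of the block form and the already-established facts $\mbox{rk}(S_i)=k$ and $\dim(V\cap X)>t/2$.
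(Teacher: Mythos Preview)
Your proposal is correct and follows essentially the same approach as the paper's proof: both define the coordinate projection $\pi$ (your $\rho_j$) onto blocks $i$ and $j$ (resp.\ block $i$ and the last block), use that $\pi|_V$ is injective because the $i$-th block of $V$ is $I_k$, and then bound $d(\pi(V),\pi(X))\le k+t-2\dim\pi(V\cap X)=k+t-2\dim(V\cap X)<k$. The preliminary remark about $\mbox{rk}(M_i)>(t-1)/2$ is not actually used in your argument (nor in the paper's), so you could drop it.
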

\begin{proof}
 Fix an integer $j$ such that $i+1 \le j \le h-1$ and denote by $\pi:\F_q^n \to
\F_q^{2k}$ the
projection
on the coordinates $ki+1, ki+2,...,k(i+1), kj+1,kj+2,...,k(j+1)$. We have
$\pi(V)=
\mbox{rowsp} \begin{bmatrix} I_k & A_j\end{bmatrix}$ and $\pi(X)=
\mbox{rowsp} \begin{bmatrix} M_i & M_j\end{bmatrix}$. In particular, $\mbox{rk}
(\pi_{|V})=k$. As a consequence,
$\dim_{\F_q} \ker (\pi_{|V}) \le k-k=0$ and so $\pi_{|V}$ is injective.
By the trivial inclusion of vector spaces $\pi(V \cap X) \subseteq \pi(V) \cap
\pi(X)$ it follows
$\dim_{\F_q} \pi(V \cap X) \le \dim_{\F_q} (\pi(V) \cap \pi(X))$. Hence
\begin{eqnarray*}
 d(\pi(V), \pi(X)) &=& k+\dim_{\F_q} \pi(X)-2\dim_{\F_q} (\pi(V) \cap \pi(X)) \\
&\le& k+t - 2\dim_{\F_q} \pi(V \cap X) \\
&=& k+t - 2\dim_{\F_q} (V \cap X) \\
&=& d(V,X) \\
&<& k.
\end{eqnarray*}
In order to prove that $d\left( \mbox{rowsp} \begin{bmatrix} I_k & A_{(k)}
 \end{bmatrix} , \mbox{ rowsp} \begin{bmatrix} M_i & M \end{bmatrix} \right) <
k$
we may notice
that the same argument still works if we choose as $\pi:\F_q^n \to \F_q^{2k+r}$
the projection
 on the coordinates
$ki+1, ki+2,...,k(i+1), k(h-1)+1,k(h-1)+2,...,kh, kh+1, ..., kh+r$. 
\end{proof}

\begin{remark}\label{2casi}
 By Lemma \ref{red}, when decoding a partial spread code we may restrict
to one of the two the cases $n=2k$ and $n=2k+r$, with $1 \le r \le
k-1$. Moreover, the lemma allows us to parallelize 
the computation, reducing the decoding complexity to the case $n=2k+r$.
\end{remark}

\section{Decoding partial spread codes} \label{dec}

In  \cite{KK1} R. K\"{o}tter and F. R. Kschischang illustrate a general
network code construction and a related efficient algorithm to decode them.
A more efficient algorithm to decode the same codes appears in \cite{KK2}.
After recalling the definition of \textit{Reed-Solomon like} code,
we use the results established in the previous section to adapt
any decoding algorithm for such codes to partial spread codes of the
form $\mC_q(k,n;p,p')$.

\begin{definition}
Let $q$ be a prime power and let $n>1$ be an integer. Let 
$A:=\{ \alpha_1,...,\alpha_k \} \subseteq \F_{q^n}$
be a set of $\F_q$-linearly independent elements.
Choose an integer $s \le k$ and denote by $\F_{q^n}^s[x]$ the vector
space of the linearized polynomial of degree at most $s$ and coefficients in
$\F_q$ (see \cite{KK1}, Section 5.A, for details).
 Fix an $\F_q$-isomorphism of vector spaces $\varphi: \F_{q^n} \to \F_q^n$.
The \textbf{Reed-Solomon like} code associated to the $6$-tuple $(q, n, k, s, A,
\varphi)$ is
the set
$$\mbox{\textbf{KK}}_q(n, k, s, A, \varphi):= \left\{ \mbox{rowsp}
\begin{bmatrix}
 &  &  &  &   & \varphi(f(\alpha_1)) \\
 &  &  & & &  \varphi(f(\alpha_1)) \\ 
 &  & I_k & & & \vdots \\
& &  &  & &  \varphi(f(\alpha_{k-1})) \\
& &  &   &  & \varphi(f(\alpha_k))   \end{bmatrix} \ : \ f \in \F_{q^n}^s[x]
\right\}.$$
\end{definition}

\begin{remark}
 A Reed-Solomon like code $\mbox{\textbf{KK}}_q(n, k, s, A, \varphi)$ is
a subset of the Grassmannian variety $\mG_q(k,k+n)$. As a consequence, it is a
$q$-ary 
network code of lenght $k+n$ and dimension $k$. The size of such a code
is given by the easy-computable formula
$|\mbox{\textbf{KK}}_q(n, k, s, A, \varphi)|=q^{sn}$. See \cite{KK1}, Section
5.1,
for a more detailed discussion.
\end{remark}

\begin{lemma}\label{comp}
Let $q$ be a prime power and let $k \ge 1$ be an integer.  Let
$p$ an irreducible monic polynomial $p \in \F_q[x]$ of degree $k$ and let
$P:=\mbox{\textbf{M}}(p)$
be its companion matrix. Choose a root $\lambda \in \F_{q^k}$ of $p$.
 Denote by $\varphi:\F_{q^n} \to \F_q^n$ the $\F_q$-isomorphism defined, for any
$0 \le i \le k-1$,
 by $\lambda^i \mapsto e_{i+1}$, where $\{ e_1,...,e_k\}$ is the canonical basis
of $\F_q^k$.
Let $A \in \F_q[P]$ and, for any $1 \le i \le k$, let $A^i \in \F_q^k$ denote
the $i$-th row 
of $A$. For any $1 \le i \le k$ we have $\varphi^{-1}(A^i)=\lambda^{j-1}
\varphi^{-1}(A^1)$.
In particular, if $f \in \F_{q^k}^1[x]$ is defined by $f(x):=\varphi^{-1}(A^1)
x$,
then $$A= \begin{bmatrix}  \varphi(f(\lambda^0)) \\  \varphi(f(\lambda)) \\
 \varphi(f(\lambda^2)) \\ \vdots \\ \varphi(f(\lambda^{k-1})) \end{bmatrix}.$$
\end{lemma}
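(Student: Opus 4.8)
The proof rests on two elementary facts: the row vector $e_1\in\F_q^k$ (where $e_1,\dots,e_k$ is the canonical basis, written as rows) is a cyclic vector for right multiplication by $P$, and $\F_q[P]$ may be identified with $\F_{q^k}$ by sending $P$ to $\lambda$. For the first, I would check directly from the shape of the companion matrix that $e_jP=e_{j+1}$ for $1\le j\le k-1$, so that $e_1P^{j-1}=e_j$ for every $1\le j\le k$; consequently the $\F_q$-linear map $\psi\colon\F_q[P]\to\F_q^k$, $\psi(B):=e_1B$, carries $P^i$ to $e_{i+1}$ for $0\le i\le k-1$ and is therefore an isomorphism. For the second, Lemma \ref{inver} gives that $\F_q[P]$ is a field with $q^k$ elements; since $P$ is a root of $p$ (a companion matrix satisfies its characteristic polynomial), the minimal polynomial of $P$ is $p$, and sending $P$ to $\lambda$ defines an $\F_q$-algebra isomorphism $\iota\colon\F_q[P]\to\F_{q^k}$.

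The key step is to observe that $\psi=\varphi\circ\iota$. Both maps are $\F_q$-linear, and for $0\le i\le k-1$ one has $\varphi(\iota(P^i))=\varphi(\lambda^i)=e_{i+1}=\psi(P^i)$, so they agree on the $\F_q$-basis $\{P^i\}_{i=0}^{k-1}$ of $\F_q[P]$; hence $e_1B=\varphi(\iota(B))$ for all $B\in\F_q[P]$. Now fix $A\in\F_q[P]$ and put $a:=\iota(A)$. Its first row is $A^1=e_1A=\varphi(\iota(A))=\varphi(a)$, so $a=\varphi^{-1}(A^1)$. For $1\le j\le k$, using $e_j=e_1P^{j-1}$, associativity of matrix multiplication, and multiplicativity of $\iota$,
$$A^j=e_jA=e_1(P^{j-1}A)=\varphi\big(\iota(P^{j-1}A)\big)=\varphi\big(\lambda^{j-1}a\big)=\varphi\big(\lambda^{j-1}\varphi^{-1}(A^1)\big),$$
which is exactly the identity $\varphi^{-1}(A^j)=\lambda^{j-1}\varphi^{-1}(A^1)$.

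For the final assertion, set $f(x):=\varphi^{-1}(A^1)\,x\in\F_{q^k}^1[x]$; then $f(\lambda^{j-1})=\lambda^{j-1}\varphi^{-1}(A^1)=\varphi^{-1}(A^j)$ by the identity just proved, so $\varphi(f(\lambda^{j-1}))$ is precisely the $j$-th row of $A$ for each $1\le j\le k$, and stacking these rows reproduces $A$. The only point demanding care is the left/right bookkeeping: one must consistently treat $\F_q^k$ as row vectors acted on by $P$ on the right, so that the shift identity $e_1P^{j-1}=e_j$ and the compatibility $e_1B=\varphi(\iota(B))$ take the form used above. Beyond that there is no substantial obstacle — the statement is essentially a translation of the $\F_q[P]$-module structure of $\F_q^k$ into the coordinates fixed by $\varphi$.
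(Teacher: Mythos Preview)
Your proof is correct. The identification $\psi=\varphi\circ\iota$ is the heart of the matter, and your verification on the basis $\{P^i\}_{i=0}^{k-1}$ together with the shift identity $e_1P^{j-1}=e_j$ cleanly yields the row formula; the deduction of the evaluation form of $A$ via $f(x)=\varphi^{-1}(A^1)x$ is then immediate.

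The paper does not actually argue this lemma: its proof consists entirely of the citation ``Use \cite{GR2}, Proposition~15, with $n=k$.'' So your direct, self-contained argument supplies what the paper outsources. Substantively there is no alternative route here---the statement is precisely the dictionary between the $\F_q[P]$-module structure on $\F_q^k$ (via $e_1$ as cyclic vector) and the multiplicative structure of $\F_{q^k}$ (via the power basis in $\lambda$), and any proof must unwind that dictionary as you do. The only remark worth adding is that the commutativity of $\F_q[P]$ is implicitly used when you write $e_1(P^{j-1}A)=\psi(P^{j-1}A)$ and then apply $\iota$: one needs $P^{j-1}A\in\F_q[P]$, which holds because $\F_q[P]$ is a (commutative) subring containing both factors.
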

\begin{proof}
 Use \cite{GR2}, Proposition 15, with $n=k$.
\end{proof}

\begin{notation}
In the costruction of a partial spread code $\mC_q(k,2k+r,p,p')$ with $0 \le r
\le k-1$, 
the companion matrix of $p$ is never involved (see Theorem \ref{th}). As a
consequence, we write
$\mC_q(k,2k+r;p')$ in this case.
\end{notation}

\begin{remark}
 By Lemma \ref{2casi}, in order to decode a partial spread code
$\mC_q(k,n;p,p')$ we may
restrict to decoding partial spread codes of the form $\mC_q(k,2k+r;p)$, with $0
\le r \le k-1$.
The case $r=0$ is easily solved. Indeed, by Lemma \ref{comp},
$\mC_q(k,2k;p) \setminus \left\{ \mbox{rowsp} \begin{bmatrix} 0_k & I_k  
\end{bmatrix}
 \right\}$ is a Reed-Solomon like code and so we may simply proceed as in the
following  Algorithm
 \ref{A1}.
\end{remark}

\begin{algorithm}\label{A1}
\caption{Decoding a $\mC_q(k,2k;p)$ code.}
\begin{algorithmic}

 
 \STATE \textbf{Data}: a decodable\footnote{A vector space $X$ is said to be
\textbf{decodable}
with respect to a network code $\mC$ if there exists a codeword $V \in \mC$ such
that
$d(V,X) \le \lfloor (d(\mC)-1)/2\rfloor$, $d(\mC)$ being the minimum distance of
$\mC$. Such a 
codeword is clearly unique.}
 $t$-dimensional row space, $X$, of a 
$(k \times 2k)$-matrix $\begin{bmatrix} M_1 & M_2\end{bmatrix}$.

\STATE  \textbf{Result}: the unique $V \in \mC_q(k,2k;p)$ such 
that $d(V,X)<k$, given as a matrix in row-reduced echelon form whose row space
is $V$.

 \IF{$\mbox{rk}(M_1) \le (t-1)/2$}
 
 \STATE $V=\mbox{rowsp} \begin{bmatrix} 0_k & I_k
\end{bmatrix}$.

\ELSE 

\STATE use a decoding algorithm for Reed-Solomon like codes on $\mC_q(k,2k;p)
\setminus
\left\{ \mbox{rowsp} \begin{bmatrix} 0_k & I_k   \end{bmatrix}
 \right\}$.
 
\ENDIF

\end{algorithmic}
\end{algorithm}

\begin{remark}
In \cite{GR2}, a decoding procedure for $\mC_q(k,hk;p)$ spread codes
which is independent of those of \cite{KK1} and \cite{KK2} is
proposed. Lemma~\ref{red} allows us to apply the decoding algorithm
from~\cite{GR2} to partial spread codes. This algorithm is particularly
efficient in the case $k\ll n$.
 \end{remark}

Now we focus on a decoding procedure for partial spread codes of the form
$\mC_q(k,2k+r;p)$ with $1 \le r \le k-1$. To be precise, in the following
Proposition \ref{emb}
we construct a canonical \textit{embedding} of a partial spread code
$\mC_q(k,2k+r;p)$
into the spread code $\mC_q(k+r,2(k+r);p)$. Any decoding procedure for
$\mC_q(k+r,2(k+r);p)$
gives, in this way, a decoding procedure for $\mC_q(k,2k+r;p)$.

\begin{proposition}\label{emb}
 Let $\mC:=\mC_q(k,2k+r,p)$ be a partial spread code with $1 \le r \le k-1$.
Let 
$X:= \mbox{rowsp} \begin{bmatrix} M_1 & M \end{bmatrix}$ be a
$t$-dimensional
vector space in $\F_q^{2n+r}$, where $M_1$ is a $(k \times k)$-matrix and $M$ is
a matrix of
size $k \times (k+r)$. Assume the existence of a matrix $A \in \F_q[P]$ such
that
$d \left( \mbox{rowsp} \begin{bmatrix} I_k & A_{(k)} \end{bmatrix} , 
\mbox{rowsp}  \begin{bmatrix} M_1 & M \end{bmatrix}\right) <k$. Define the
following two
$(k+r)\times (k+r)$-matrices:
$$\overline{M}_1:= \begin{bmatrix} 0_r & 0_{r \times k} \\ 0_{k \times r} & M_1
\end{bmatrix},
\ \ \ \ \ \overline{M}:= \begin{bmatrix} 0_{r\times (k+r)} \\ M \end{bmatrix}.$$
We have  $$d \left( \mbox{rowsp} \begin{bmatrix} I_{k+r} & A \end{bmatrix} , 
\mbox{ rowsp}  \begin{bmatrix} \overline{M}_1 & \overline{M}
\end{bmatrix}\right)
<k+r.$$
\end{proposition}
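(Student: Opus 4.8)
The strategy is to relate the subspace distance in $\F_q^{2k+r}$ to the subspace distance in $\F_q^{2(k+r)}$ after the padding-by-zeros operation, and show this operation does not increase distance. Write $V:=\mbox{rowsp}\begin{bmatrix} I_k & A_{(k)}\end{bmatrix}$ and $\overline V:=\mbox{rowsp}\begin{bmatrix} I_{k+r} & A\end{bmatrix}$ for the codewords, and $X$ as in the statement together with $\overline X:=\mbox{rowsp}\begin{bmatrix}\overline M_1 & \overline M\end{bmatrix}$. First I would observe that both $\overline V$ and $\overline X$ are $(k+r)$-dimensional and $t$-dimensional respectively (the padding rows are zero, and $\begin{bmatrix}I_{k+r} & A\end{bmatrix}$ clearly has full row rank $k+r$; the rank of $\begin{bmatrix}\overline M_1 & \overline M\end{bmatrix}$ equals $\mbox{rk}\begin{bmatrix}M_1 & M\end{bmatrix}=t$ since we only adjoined zero rows and zero columns). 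Since $d(\overline V,\overline X)=(k+r)+t-2\dim(\overline V\cap\overline X)$ and $d(V,X)=k+t-2\dim(V\cap X)<k$, it suffices to prove the key inequality $\dim(\overline V\cap\overline X)\ge\dim(V\cap X)+r$, which then yields $d(\overline V,\overline X)\le (k+r)+t-2(\dim(V\cap X)+r)=(k+t-2\dim(V\cap X))-r=d(V,X)-r<k$, giving even a slightly stronger bound than $k+r$.

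To prove the key inequality, I would exhibit a concrete $r$-dimensional subspace of $\overline V\cap\overline X$ that meets $V\cap X$ (embedded appropriately) only in $\{0\}$. Observe that the first $r$ columns of both $\overline M_1$ and $\overline M$ are zero, so for any vector $w\in\F_q^r$ the row vector $\begin{bmatrix} w & 0 & \cdots\end{bmatrix}$ (with $w$ in positions $1,\dots,r$ and zeros elsewhere) lies in $\overline X^\perp$ — wait, more usefully: consider instead the rows of $\begin{bmatrix}I_{k+r} & A\end{bmatrix}$ indexed $1,\dots,r$. Because $A=\mbox{\textbf{M}}(p')$-type, $A_{(k)}$ consists of the bottom $k$ rows of $A$, and the identification in the statement pads $M$ and $M_1$ at the top; the natural $\F_q$-linear embedding $\iota:\F_q^{2k+r}\hookrightarrow\F_q^{2(k+r)}$ inserting $r$ zero coordinates in position $1,\dots,r$ of the first block and $r$ zero coordinates in position $k+1,\dots,k+r$ of the second block sends $V$ into $\overline V$ and $X$ into $\overline X$ injectively, hence $\iota(V\cap X)\subseteq \overline V\cap\overline X$ and $\dim(\overline V\cap\overline X)\ge\dim(V\cap X)$. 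To get the extra $+r$, I would show that the top $r$ rows of $\begin{bmatrix}I_{k+r} & A\end{bmatrix}$ — call their span $W$, an $r$-dimensional subspace of $\overline V$ — also lie in $\overline X$: since those rows have the form $\begin{bmatrix}e_j & (\text{row }j\text{ of }A)\end{bmatrix}$ for $j=1,\dots,r$, and $\overline M$ has its top $r$ rows free to be chosen... actually the cleanest route is: $\overline M_1$ and $\overline M$ each have $r$ zero rows at the top, so $\dim\overline X=t\le k$, meaning $\overline X$ omits at least $r$ of the standard directions; one checks $W\subseteq\overline X$ directly by a rank computation of $\begin{bmatrix}I_{k+r}&A\\\overline M_1&\overline M\end{bmatrix}$ analogous to the $N_3$/$\tilde H$ computation in the proof of Theorem~\ref{th}, and $W\cap\iota(V)=\{0\}$ by the block positions.

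The main obstacle is the bookkeeping in the last step: making precise that the two "$r$-dimensional worth" of intersection — namely $W$ coming from the padding and $\iota(V\cap X)$ coming from the original intersection — are in direct sum inside $\overline V\cap\overline X$. I expect this to follow from looking at the first $r$ coordinates: every vector of $W$ has support meeting positions $1,\dots,r$ in a way spanning all of $\F_q^r$ there (since those coordinates come from $I_{k+r}$), whereas every vector of $\iota(V\cap X)$ vanishes on positions $1,\dots,r$ by construction of $\iota$; hence $W\cap\iota(V\cap X)=\{0\}$ and $\dim(\overline V\cap\overline X)\ge\dim W+\dim\iota(V\cap X)=r+\dim(V\cap X)$. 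Plugging into the distance formula as above finishes the proof; one should double-check that $\overline V$ really is a codeword of $\mC_q(k+r,2(k+r);p')$, i.e. that $\begin{bmatrix}I_{k+r}&A\end{bmatrix}$ with $A\in\F_q[P']$ is indeed in the prescribed form of Theorem~\ref{th} for parameters $(k+r,2(k+r))$, which it is since there $h=2$, the first block is $I_{k+r}$, and the last block ranges over $\F_q[P']$.
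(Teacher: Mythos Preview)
Your plan contains a genuine error in the step you flag as ``the main obstacle.'' You claim that the span $W$ of the first $r$ rows of $\begin{bmatrix} I_{k+r} & A\end{bmatrix}$ lies in $\overline X$. This is false, and in fact you already recorded the reason a few lines earlier: every vector in $\overline X=\mbox{rowsp}\begin{bmatrix}\overline M_1 & \overline M\end{bmatrix}$ vanishes on coordinates $1,\dots,r$ (the first $r$ columns of both $\overline M_1$ and $\overline M$ are zero), whereas every nonzero vector of $W$ has nonzero support there. Hence $W\cap\overline X=\{0\}$, not $W\subseteq\overline X$. More globally, the ``key inequality'' $\dim(\overline V\cap\overline X)\ge\dim(V\cap X)+r$ cannot hold in general: since $\overline X$ has dimension $t$, the left-hand side is at most $t$, while for small $t$ (e.g.\ $t=1$, $r\ge 1$, $\dim(V\cap X)=1$) the right-hand side exceeds $t$.

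The fix is that you do not need the extra $+r$ at all. You already proved, via the embedding $\iota$, the weaker inequality $\dim(\overline V\cap\overline X)\ge\dim(V\cap X)$, and this is exactly what the paper uses. From $d(V,X)=k+t-2\dim(V\cap X)<k$ one gets $t-2\dim(V\cap X)<0$, hence
\[
d(\overline V,\overline X)=(k+r)+t-2\dim(\overline V\cap\overline X)\le (k+r)+\bigl(t-2\dim(V\cap X)\bigr)<k+r.
\]
So your argument becomes correct (and coincides with the paper's) once you drop the attempt to gain an additional $r$ dimensions of intersection. As a side remark, your description of $\iota$ is slightly garbled: only the first block needs $r$ zero coordinates prepended; the second block already has size $k+r$ and is left untouched.
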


\begin{proof}
Set $V:= \mbox{rowsp} \begin{bmatrix} I_k & A_{(k)} \end{bmatrix}$ and observe
that the hypothesis
$d(V,X)<k$ can be restated as $\dim(V \cap X) > t/2$. Define $\overline{V}:=
\mbox{rowsp}
 \begin{bmatrix} I_{k+r} & A \end{bmatrix}$
and $\overline{X}:= \mbox{rowsp} \begin{bmatrix} \overline{M}_1 & \overline{M}
\end{bmatrix}$.
By construction, $\dim_{\F_q} X= \dim_{\F_q} \overline{X}=t$ and
$\dim_{\F_q} (\overline{V} \cap \overline{X})  \ge \dim_{\F_q} (V \cap X)$. It
follows
\begin{eqnarray*}
 d(\overline{V}, \overline{X}) &=& \dim_{\F_q} \overline{V}+ \dim_{\F_q}
\overline{X}-
2\dim_{\F_q} (\overline{V} \cap \overline{X}) \\
&=& k+r+t-2\dim_{\F_q} (\overline{V} \cap \overline{X}) \\
&\le& k+r+t-2\dim_{\F_q} (V \cap X) \\
&<& k+r+t-2(t/2) \\
&=& k+r,
\end{eqnarray*}
as claimed.
\end{proof}

\begin{remark}
 Proposition \ref{emb} has the following useful interpretation. Assume that a
partial spread
code $\mC_q(k,2k+r;p)$ is given, with $1 \le r \le k-1$, and
 $X:= \mbox{rowsp} \begin{bmatrix} 
M_1 & M \end{bmatrix}$ is received ($M_1$ and $M$ being as in the statement of
the proposition).
Then we may construct the matrices $\overline{M}_1$ and $\overline{M}$ as
described and
consider the vector space  
$\overline{X}:= \mbox{rowsp} \begin{bmatrix} \overline{M}_1 & \overline{M}
\end{bmatrix}$.
The minimum distance of the  (partial) spread code $\mC_q(k+r, 2(k+r);p)$ is
$2(k+r)$. 
By Proposition \ref{emb}, if $X$ decodes to 
$V:=\mbox{rowsp} \begin{bmatrix} I_k & A_{(k)}\end{bmatrix}$ in
$\mC_q(k,2k+r;p)$, then
$\overline{X}$ decodes to $\overline{V}:=\mbox{rowsp} \begin{bmatrix} I_{k+r} &
A \end{bmatrix}$
in $\mC_q(k+r,2(k+r);p)$.
It follows that Algorithm \ref{A1} (with $k \leftarrow k+r$) applied to
$\overline{X}$ produces 
$\begin{bmatrix} I_{k+r} & A \end{bmatrix}$. Finally, $V$ is the rowspace of the
matrix 
obtined by deleting the first $r$ rows and the first $r$ coloumns of
$\begin{bmatrix} I_{k+r} & A \end{bmatrix}$. This discussion leads to the
following
 Algorithm \ref{A2}.
\end{remark}

\begin{algorithm}\label{A2}
\caption{Decoding a $\mC_q(k,2k+r;p)$ code with $1 \le r \le k-1$.}
\begin{algorithmic}
 \STATE \textbf{Data}: a decodable $t$-dimensional row space, $X$, of a 
$(k \times 2k+r)$-matrix $\begin{bmatrix} M_1 & M \end{bmatrix}$.
 \STATE \textbf{Result}: the unique $V \in \mC_q(k,2k+r;p)$ such 
that $d(V,X)<k$, given as a matrix in row-reduced echelon form whose row space
is $V$.

 \IF{$\mbox{rk}(M_1) \le (t-1)/2$}
 
 \STATE $V=\mbox{rowsp} \begin{bmatrix} 0_k & 0_{k
\times r} & I_k 
\end{bmatrix}$.

\ELSE

\STATE construct the matrix $\begin{bmatrix} \overline{M}_1 &
\overline{M}\end{bmatrix}$ as
explained in Lemma \ref{emb}. Then use Algorithm \ref{A1} with $\mC_q(k+r,
2(k+r);p)$
on 
$\begin{bmatrix} \overline{M}_1 & \overline{M}\end{bmatrix}$. Delete the first
$r$ rows and the first $r$ coloumns of the output.

\ENDIF
\end{algorithmic}
\end{algorithm}

\begin{remark}
By Proposition \ref{emb}, in Algorithm \ref{A2} we may replace the use of
Algorithm \ref{A1} with any other decoding algorithm for spread codes. 
\end{remark}

\section*{Conclusions}
In this paper we provide an easy description of partial spreads over
finite fields, whose interest dates back to classical problems in
projective geometry. We suggest the use of partial spreads as network codes,
investigating the mathematical properties due to our construction, proving
their maximality, and providing a decoding algorithm for them.

\section*{Acknowledgment}
The authors would like to thank Leo Storme for useful discussions on
partial spreads in finite projective geometry.

\end{document}